\newcommand{\be}{\begin{equation}}
\newcommand{\ee}{\end{equation}}
\newcommand{\ba}{\begin{eqnarray}}
\newcommand{\ea}{\end{eqnarray}}
\newtheorem{theorem}{Theorem}
\def\>{\rangle}
\def\<{\langle}
\newcommand{\orcid}[1]{\href{https://orcid.org/#1}{\includegraphics[width=10pt]{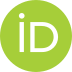}}}
\begin{document}
\title{Classification and Quantification of Entanglement Through Wedge Product and Geometry}

\author{Soumik Mahanti\orcid{0000-0002-0380-0324}}
\email{soumikmh1998@gmail.com}
\affiliation{Department of Physical Sciences, IISER Kolkata, Mohanpur 741246, West Bengal, India}

\author{Sagnik Dutta\orcid{0000-0003-2083-9922}}
\email{sagnikdutta17@gmail.com}
\affiliation{Department of Physical Sciences, IISER Kolkata, Mohanpur 741246, West Bengal, India}

\author{Prasanta K. Panigrahi\orcid{0000-0001-5812-0353}}
\email{panigrahi.iiser@gmail.com}
\affiliation{Department of Physical Sciences, IISER Kolkata, Mohanpur 741246, West Bengal, India}

\begin{abstract}
 Wedge product of post-measurement vectors leading to an `area' measure of the parallelogram has been shown to give the generalized I-concurrence measure of entanglement.  Extending the wedge product formalism to multi qudit systems, we have presented a modified faithful entanglement measure, incorporating the higher dimensional volume and the area elements of the parallelepiped formed by the post-measurement vectors. The measure fine grains the entanglement monotone, wherein different entangled classes manifest with different geometries. We have presented a complete analysis for the bipartite qutrit case considering all possible geometric structures. Three entanglement classes can be identified with different geometries of post-measurement vectors, namely three planar vectors, three mutually orthogonal vectors, and three vectors that are neither planar and not all of them are mutually orthogonal. It is further demonstrated that the geometric condition of area and volume maximization naturally leads to the maximization of entanglement. The wedge product approach uncovers an inherent geometry of entanglement and is found to be very useful for characterization and quantification of entanglement in higher dimensional systems.

\end{abstract}

\maketitle

\section{Introduction}
Quantum entanglement has fundamental importance in both applications of quantum information processing and the foundational understanding of quantum theory \cite{wilde2013quantum}. It is the most useful resource for various quantum communication protocols like quantum teleportation \cite{Bennet_1993,D_Saha,Sagnik,Joo,Hu}, dense coding \cite{Bennet_1992,Ujjwal1,Ujjwal2,Pati,Murali,Guo}, and secret sharing  \cite{Hillery_1999,Gottesman,Singh}, which gives the quantum advantage over the classical communication protocols. Therefore, it is of practical interest to quantify this resource to estimate the efficiency of such protocols. Various aspects of quantum entanglement have been studied extensively \cite{Horodecki_2009}, however the classification and quantification of entangled states is still not well understood in higher dimensional systems \cite{Bastin,Ribeiro,LiLi,Gilad,Ghahi,Home}. Entangled qudits have advantage over entangled two level systems in several communication and cryptographic protocols \cite{Peres,Collins,qutrit}, exhibiting stronger nonlocality in maximally entangled states \cite{Kaszlikowski}. In short, the study of characterization and quantification of entanglement is a very rich area of research even today \cite{Gharahi2,Eberly_2021}, particularly from a geometric perspective, which can provide insight into the distributive nature of entanglement.

Concurrence, introduced by Hill and Wootters, is a faithful measure of entanglement that provides the necessary and sufficient conditions of separability for a pair of qubits \cite{Hill_1997}. Rungta et al. extended the concurrence to bipartite pure states in $d_1\otimes d_2$ systems \cite{Rungta_PRA_2001}. Various geometry based approaches are also there to quantify entanglement \cite{Horogeo,Goldbart,Bhaskar,Abhinash,Shreya,Gugeo,Kh}. Bhaskara and Panigrahi used the wedge product framework and Lagrange-Brahmagupta identity to provide for an entanglement measure based on geometry for pure states \cite{Bhaskar}. The Bhaskara-Panigrahi measure \cite{Bhaskar,Abhinash,Shreya,Qureshi} of entanglement across any bipartition of a multiparty state is the amount of total area formed by the post-measurement vectors of a subsystem.p. If all post-measurement vectors in a subsystem are parallel, then the state is separable across that bipartition. A pure composite state is separable, if it is separable across all possible bipartitions. This geometrical condition greatly simplifies the separability criteria in multi qudit systems. It has been further shown that the measure yields the generalized I-concurrence in two qudit scenario, making it a good measure of entanglement.

In this work, we present a modified faithful entanglement measure based on the wedge product formalism. Higher dimensional volume and area, denoted by multiple wedge products, is incorporated into our new entanglement measure. In any bipartition, wedge products between all post-measurement vectors construct a hypervolume of the parallelepiped formed by the vectors. Similarly, wedge product between multiple but not all post-measurement vectors render the area of the hypersurface formed by those vectors. Incorporating these terms into the measure reveals insightful geometrical structures of the entangled states. We present a complete analysis of all possible geometries of pure bipartite qutrit states using our new entanglement measure. The general structure for entangled state in two qutrit systems are parallelepiped formed by the three post-measurement vectors in $\mathcal{C}^3$. Two qutrit systems have three entangled classes \cite{PanLu}, which are shown to construct three distinct types of geometric structures that are not convertible from one another under local unitary operations. These classes have the geometries of three planar vectors, three mutually orthogonal vectors, and three vectors that are neither planar nor all mutually orthogonal. We further demonstrate that the maximization of volume and area of the parallelepiped for a particular type yields the maximally entangled state of that type and those conditions are easily deduced from the geomtery. 

The paper is organized as follows. We introduce our entanglement measure for two qutrit systems in section \ref{section_2}, and show why it is a faithful measure of entanglement. In section \ref{section3}, we discuss about how entanglement is classified and what are the entangled classes for two qutrits. In section \ref{section4}, we describe the geometric structures and its connection to the entangled classes. The entanglement maximization conditions from geometry are also deduced. Generalized geometric measure for multi qudit pure state is presented in section \ref{section 5}.  

%------------------------------------------------------------------------------------------------------------------------------------------------------------------------------------------------------------------------------------------------------------------------------------------------------------------------------------------------------------------------------------------------------------------------------ 
\section{Two qutrit entanglement measure}
\label{section_2}
A pure two qutrit state can be written in the computational basis of Alice and Bob as 
\begin{equation}\label{eq1}
\begin{split}
    \ket{\psi} = & a\ket{00}+b\ket{01}+c\ket{02}+p\ket{10}+q\ket{11}\\
    &+r\ket{12}+x\ket{20}+y\ket{21}+z\ket{22}
\end{split}
\end{equation}
with the normalization condition $|a|^2+|b|^2+|c|^2+|p|^2+|q|^2+|r|^2+|x|^2+|y|^2+|z|^2 = 1$.

The state can be rewritten as a convex sum of product vectors, where Alice's subsystem is in the computational basis.
\begin{equation}\label{eq2}
\ket{\psi} = \ket{0}_A\ket{\eta_0}_B+\ket{1}_A\ket{\eta_1}_B+\ket{2}_A\ket{\eta_2}_B    
\end{equation}
Since, if Alice measures her subsystem in the computational basis, the state of Bob will be $\frac{\ket{\eta_i}_B}{\braket{\eta_i|\eta_i}_B}$ corresponding to Alice's measurement result $i$, we call the vectors $\ket{\eta_i}_B$ as post-measurement vectors. Here,
\begin{equation}\label{eq3}
\begin{split}
\ket{\eta_0}_B = a\ket{0}+b\ket{1}+c\ket{2}\\
\ket{\eta_1}_B = p\ket{0}+q\ket{1}+r\ket{2}\\
\ket{\eta_2}_B = x\ket{0}+y\ket{1}+z\ket{2}
\end{split}
\end{equation}
$\ket{\eta_i}_B$ are the unnormalized post-measurement vectors for Bob's subsystem. From Eq. \ref{eq2}, it can be shown that the state takes a product form only if for some complex number $m$ and $n$, the following relation holds $\ket{\eta_1}_B = m\ket{\eta_0}_B, \text{and} \ket{\eta_2}_B = n\ket{\eta_0}_B$. Then, the state can be written as $\ket{\psi}=(\ket{0}_A+m\ket{1}_A+n\ket{2}_A)\otimes \ket{\eta_0}_B$ or in other similar product forms.

So, the state is separable if and only if the post measurement vectors $\ket{\eta_0}_B, \ket{\eta_1}_B \text{and} \ket{\eta_2}_B$ are parallel to each other. Using this fundamental connection of separability with the parallelism of complex vectors of a subsystem, Bhaskara and Panigrahi gave an entanglement measure for pure states that yields the generalized I-concurrence.
In this work, a modified faithful entanglement measure is presented incorporating multiple wedge product between the post-measurement vectors.

Our modified entanglement measure based on geometry of post-measurement vectors, called $E_G$ for two qutrit systems is given by:
\begin{equation}\label{eq6}
\begin{split}
     E_G &= 9 |\ket{\eta_0}\wedge\ket{\eta_1}\wedge\ket{\eta_2}|^2+2(|\ket{\eta_0}\wedge\ket{\eta_1}|^2\\
            &+|\ket{\eta_1}\wedge\ket{\eta_2}|^2+|\ket{\eta_2}\wedge\ket{\eta_0}|^2)
\end{split}
\end{equation}
In terms of the continuous parameters $a,b,\dots,z$, the measure can be written as 
\begin{equation}\label{eq7}
    E_G = 9 |\det A|^2 + 2 \left[ \begin{Vmatrix}
    br-cq\\
    cp-ar\\
    aq-bp
    \end{Vmatrix}^2 + \begin{Vmatrix}
    bz-cy\\
    cx-az\\
    ay-bx
    \end{Vmatrix}^2+ \begin{Vmatrix}
    qz-ry\\
    xr-pz\\
    py-qx
    \end{Vmatrix}^2 \right]
\end{equation}
where $A = \left(\begin{smallmatrix}
a &b&c\\p &q&r\\x &y&z \end{smallmatrix}\right)$, and the $\lVert .\rVert$ sign is used to represent norm of a column vector. Expanding by terms, the full expression is the following 

\begin{equation}\label{eq8}
\begin{split}
E_G=  &9|a(qz-ry)-b(pz-rx)+c(py-qx)|^2\\&+2(|br-cq|^2+|cp-ar|^2+|aq-bp|^2)\\&+2(|bz-cy|^2+|cx-az|^2+|ay-bx|^2)\\&+2(|qz-ry|^2+|xr-pz|^2+|py-qx|^2)
\end{split}
\end{equation}

Geometrically, $\ket{\eta_0}$, $\ket{\eta_1}$, and $\ket{\eta_2}$ are three vectors in $\mathcal{C}^3$ Hilbert space. The first term of the Eq. \ref{eq6} geometrically represents the modulus square of the volume of the parallelopiped formed by those vectors. The rest three terms are square of all six surface areas of the 3 dimensional parallelepiped. The factor of 9 has been put before the volume element to restrict the measure $E_G$ to take value between 0 and 1. It is 1 for the maximally entangled states of the form $\frac{1}{\sqrt{3}}(\ket{00}+\ket{11}+\ket{22})$, and 0 for separable states. 
Any measure of entanglement is considered as a `good' measurement if it follows the following three conditions.
\begin{itemize}
   \item The measure must give a value 0 for any separable state.
   \item Value of the measure should be local unitary invariant. 
   \item Average entanglement value after LOCC must not increase.
\end{itemize}

We have already shown that $E_G$ satisfies the first condition since parallelism of all vectors make the wedge product zero. Under local unitary operation on the second sub-system, the length and the angle between the post-measurement vectors of that subsystem are unchanged. Hence, the wedge products, and consequently, $E_G$  remains invariant. Again, it is seen that the entanglement measure $E_G$ is party symmetric, which implies that its value is independent of whether one takes the post-measurement vectors of subsystem 1 or 2. Hence, local unitary operation on the first subsystem will also leave the value of $E_G$ unchanged. Therefore, the second condition is also satisfied. We only consider single copy pure state scenario, for which local unitary invariance is same as LOCC invariance \cite{Horodecki_2009}. Hence, the entanglegemnt measure $E_G$ satisfies all the criteria for a faithful measure of entanglement. 
%-------------------------------------------------------------------------------------------------------------------------------------------------------------------------------------------------------------------------------------------------------------------------------------------------------------------------------------------
\section{Entanglement classes for bipartite qutrit systems}\label{section3}
 As entanglement appeared to be a very useful resource for quantum communication, the important set of operations emerged as local operations. The idea behind is that distant parties can influence their sub-system only. Hence, local operations and classical communication between distant parties are the only available means for the manipulation of entangled states for communication purposes. Nielsen provided with a significant result about transformation criteria between two bipartite pure states under LOCC \cite{Nielsen}. But exact transformation under LOCC lacks continuity, and the Nielsen condition is not applicable for multiparty scenario. The necessary and sufficient condition for LOCC transformation between two pure states is if they can be transformed under local unitary operation. Dur et al. proposed another classification encapsulating qualitative features of entanglement in terms of stochastic local operation and classical communication \cite{Dur}. Two pure states are said to be equivalent if they can be transformed into each other by LOCC with some nonzero probability. Mathematically, if $\ket{\psi} = A_1\otimes\dots\otimes A_n\ket{\phi}$, with $A_i$ being reversible operators, then $\ket{\psi}$ and $\ket{\phi}$ are equivalent. Entangled states that cannot be transformed into each other by SLOCC, belong to different entangled classes. Previous results in the search of entangled classes have shown that there are three inequivalent entangled classes for three qubit pure states \cite{PanLu}. Bipartite pure states in $d\otimes d$ dimension have $d$ entangled classes of states \cite{Horodecki_2009}. Several works have been done regarding classification and characterization of entangled states in pure and mixed scenario \cite{Acinn,Miyake,Miyakee,Verstraete_four_qubits,Guhne,nandi2022wigner}. We present the classification of bipartite qutrit entangled states based on our geometric measure of entanglement. Our analysis considers pure states with a single copy available scenario. 
\section{Geometrical interpretation of entanglement classes}\label{section4}
%-----------------------------------------------------------------------------------------------------------------------------------------------------------------------------------------------------------------------------------------------------------------------------------------------------------------------------------------------------------------------------------------------------------------------------------------------------------------------------------------------------------------------------------------------------------------------------------------------------------------------------------------------------------------------------------------------------------------------------------------------
From Eq. \ref{eq3}, it is seen that there are three post-measurement vectors, belonging to $\mathcal{C}^3$. For geometrical representation in real space, $\ket{0}$ is represented in the X-axis, $\ket{1}$ in the Y-axis, and $\ket{2}$ in the Z-axis. Post-measurement vector of the form $a\ket{0}+b\ket{1}+c\ket{2}$ is represented as a point vector in $\mathcal{R}^3$. All the mathematical results presented here are valid for any complex number, but this 3-dimensional real representation captures all the intricate geometries of the entangled states in bipartite qutrit systems. Below, three theorems are presented to show how the entanglement classes are related to the geometry. % as $\{Re(a)+Im(a),Re(b)+Im(b),Re(c)+Im(c)\}$.
\begin{theorem}\label{Th1}
Three planar vectors remain in the same plane after local unitary operation in any of the subsystem.
\end{theorem}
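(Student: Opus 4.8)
The plan is to translate the geometric hypothesis into a statement about the span of the post-measurement vectors. Three vectors $\ket{\eta_0},\ket{\eta_1},\ket{\eta_2}\in\mathcal{C}^3$ are coplanar precisely when they span a subspace of dimension at most two, equivalently when the top wedge $\ket{\eta_0}\wedge\ket{\eta_1}\wedge\ket{\eta_2}$ vanishes (the parallelepiped of Eq.~\ref{eq6} has zero volume). Since planarity is exactly this rank condition, it suffices to show that a local unitary on either party sends the three vectors to three new vectors whose span still has dimension at most two. I would treat the two parties separately, since the post-measurement vectors transform in genuinely different ways on the two sides.

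First I would dispose of a unitary $U_B$ on Bob's side. Reading off Eq.~\ref{eq2}, applying $I\otimes U_B$ sends each $\ket{\eta_i}\mapsto U_B\ket{\eta_i}$. Because $U_B$ is a linear isomorphism of $\mathcal{C}^3$, it carries the common plane $V=\Span\{\ket{\eta_0},\ket{\eta_1},\ket{\eta_2}\}$ onto $U_B(V)$, a subspace of the same dimension. Hence the three images again lie in a common two-dimensional subspace and remain coplanar; the plane itself is rotated by $U_B$, but coplanarity is untouched.

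The step that requires the correct transformation law is a unitary $U_A$ on Alice's side. Writing $U_A\ket{i}=\sum_j (U_A)_{ji}\ket{j}$ and regrouping $(U_A\otimes I)\ket{\psi}=\sum_j\ket{j}_A\ket{\eta_j'}_B$, one finds that the new post-measurement vectors are the linear combinations $\ket{\eta_j'}=\sum_i (U_A)_{ji}\ket{\eta_i}$ of the old ones. I would then note that any linear combination of vectors lying in $V$ again lies in $V$, so all three $\ket{\eta_j'}$ stay inside the original plane; moreover, since the mixing matrix is just $U_A$ and is invertible, the span is preserved exactly. Thus under $U_A$ the triple is coplanar in the very same plane $V$, which is the literal content of the statement.

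The conceptual content is therefore essentially finished once planarity is identified with the rank of the span, and the only genuinely delicate point — the place I would be most careful — is deriving $\ket{\eta_j'}=\sum_i (U_A)_{ji}\ket{\eta_i}$ correctly and recognising that $U_A$ acts as an invertible reshuffling of the post-measurement vectors, so no dimension can be created or destroyed. Both cases can in fact be packaged uniformly in the wedge language already in play: $U_B$ multiplies $\ket{\eta_0}\wedge\ket{\eta_1}\wedge\ket{\eta_2}$ by $\det U_B$ and $U_A$ multiplies it by $\det U_A$, and since these factors are nonzero, the vanishing of the top wedge — hence the planar class — is stable under local unitaries on either subsystem.
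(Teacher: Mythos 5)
Your argument is correct and is essentially the paper's own: the key step in both is that a unitary on Alice's side replaces the post-measurement vectors by linear combinations $\ket{\chi_j}=\sum_i x_{ji}\ket{\eta_i}$ of the originals, which therefore cannot leave the original two-dimensional span, while a unitary on Bob's side merely maps that span isomorphically onto another plane. Your treatment is marginally more careful than the paper's on the Bob-side case (the paper appeals to preservation of lengths and angles, you to preservation of rank), and your closing remark that the top wedge is only rescaled by $\det U\neq 0$ is a clean uniform repackaging of the same fact, but there is no substantive difference in approach.
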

\begin{proof}
: From Eq. \ref{eq2}, an arbitrary two qutrit state can be written as $\ket{\psi} = \ket{0}_A\ket{\eta_0}_B+\ket{1}_A\ket{\eta_1}_B+\ket{2}_A\ket{\eta_2}_B$. Applying any local unitary in the second subsystem does not change the angle between the vectors $\ket{\eta_0}_B$, $\ket{\eta_1}_B$, and $\ket{\eta_2}_B$. Hence, only the orientation of the parallelepiped can change but not the shape. Now, let Alice apply an arbitrary local unitary operation $U$ in her subsystem. 
$U=\begin{pmatrix}x_{00}&x_{01}&x_{02}\\x_{10}&x_{11}&x_{12}\\x_{20}&x_{21}&x_{22}
\end{pmatrix}$.\\ \vspace{0.15cm}
$U\ket{0} = x_{00}\ket{0}+x_{10}\ket{1}+x_{20}\ket{2}, \hspace{0.15cm}
    U\ket{1} = x_{01}\ket{0}+x_{11}\ket{1}+x_{21}\ket{2}, \hspace{0.15cm} \text{and}\hspace{0.15cm}
    U\ket{0} = x_{02}\ket{0}+x_{12}\ket{1}+x_{22}\ket{2}$. Therefore, $(U\otimes \mathbb{I})\ket{\psi} =
\ket{0}_A\ket{\chi_0}_B+\ket{1}_A\ket{\chi_1}_B+\ket{2}_A\ket{\chi_2}_B$. 
Where,
\begin{equation}\label{eq9}
\begin{split}
    \ket{\chi_0}_B=x_{00}\ket{\eta_{0}}+x_{01}\ket{\eta_{1}}+x_{02}\ket{\eta_{2}}\\
    \ket{\chi_1}_B=x_{10}\ket{\eta_{0}}+x_{11}\ket{\eta_{1}}+x_{12}\ket{\eta_{2}}\\
    \ket{\chi_2}_B=x_{20}\ket{\eta_{0}}+x_{21}\ket{\eta_{1}}+x_{22}\ket{\eta_{2}}
\end{split}
\end{equation}
If the vectors lie in a plane, then without loss of generality, $\ket{\eta_2}=m\ket{\eta_0}+n\ket{\eta_1}$. Therefore,
\begin{align*}
\ket{\chi_0}_B=(x_{00}+mx_{20})\ket{\eta_0}_B+(x_{01}+nx_{02})\ket{\eta_1}_B\\ \ket{\chi_1}_B=(x_{10}+mx_{12})\ket{\eta_0}_B+(x_{11}+nx_{21})\ket{\eta_1}_B\\ \ket{\chi_2}_B=(x_{20}+mx_{22})\ket{\eta_0}_B+(x_{12}+nx_{22})\ket{\eta_1}_B
\end{align*}
So, the post measurement vectors $\ket{\chi_0}, \ket{\chi_1}, \text{and} \ket{\chi_2}$ lie in the same plane as do $\ket{\eta_0}, \ket{\eta_1}, \text{and} \ket{\eta_2}$. Therefore, the planar structure is retained after applying local unitary to either of the subsystem.
\end{proof}
%-----------------------------------------------------------------------------------------------------------------------------------------------------------------------------------------------------------------------------------------------------------------------------------------------------------------------------------------------------------------------------------------------------------------------------------------------------------------------------------------------------------------------------------------------------------------------------------------------------------------------------------------------------------------------------------------------------------------------------------------------
\begin{theorem}\label{Th2}
Three orthogonal vectors remain orthogonal after local unitary operation in any of the subsystem.
\end{theorem}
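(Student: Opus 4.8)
The plan is to mirror the proof of Theorem~\ref{Th1} and dispose of the two subsystems in turn. If the local unitary $V$ acts on Bob's side then each post-measurement vector transforms as $\ket{\eta_i}\mapsto V\ket{\eta_i}$, and because $V$ preserves inner products we have $\langle V\eta_i|V\eta_j\rangle=\langle\eta_i|\eta_j\rangle$; mutual orthogonality (and in fact every pairwise overlap and norm) is therefore retained without further work. All the substance of the theorem lies in the action of a local unitary $U$ on Alice's side, which by Eq.~\ref{eq9} sends $\ket{\eta_i}$ to the linear combinations $\ket{\chi_k}=\sum_i x_{ki}\ket{\eta_i}$, and the task is to show that orthogonality of the $\ket{\eta_i}$ forces orthogonality of the $\ket{\chi_k}$.

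The device I would use is the reduced state $\rho_A=\Tr_B\ket{\psi}\bra{\psi}$, whose matrix elements in the computational basis are exactly the overlaps $(\rho_A)_{ij}=\langle\eta_j|\eta_i\rangle$. Thus the $\ket{\eta_i}$ are mutually orthogonal precisely when $\rho_A$ is diagonal, with diagonal entries $\lVert\eta_i\rVert^2$, and likewise the $\ket{\chi_k}$ are mutually orthogonal precisely when the transformed reduced state is diagonal. Since a local unitary on Alice's side acts by $\rho_A\mapsto U\rho_A U^{\dagger}$, the entire theorem collapses to a single linear-algebraic assertion: conjugating the diagonal matrix $\rho_A=\mathrm{diag}(\lVert\eta_0\rVert^2,\lVert\eta_1\rVert^2,\lVert\eta_2\rVert^2)$ by an arbitrary unitary $U$ again produces a diagonal matrix.

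This final reduction is where I expect the real difficulty to concentrate. When the three orthogonal vectors have a common length, $\rho_A\propto\mathbb{I}$ commutes with every $U$, so $U\rho_A U^{\dagger}=\rho_A$ stays diagonal and the conclusion is immediate. For unequal lengths the off-diagonal entries of $U\rho_A U^{\dagger}$ are the weighted sums $\sum_i \overline{x_{ki}}\,x_{li}\,\lVert\eta_i\rVert^2$, and the unitarity identity $\sum_i\overline{x_{ki}}x_{li}=\delta_{kl}$ does not make these vanish. My plan for closing this gap is to identify exactly which unitaries fix a diagonal $\rho_A$: for distinct eigenvalues the stabiliser inside $\mathrm{U}(3)$ is only the monomial subgroup (a permutation dressed with phases), which is the genuine residual freedom that keeps the orthogonal ``box'' rigid, while the separate unitary freedom on Bob's side can always be invoked, through the Schmidt decomposition, to restore a diagonal $\rho_A$. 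The point I would examine most carefully --- and the crux of the whole argument --- is how to reconcile the unqualified phrase \emph{any} local unitary with this dependence on the lengths, since it is the equality of the three Schmidt weights, and not orthogonality by itself, that renders the configuration invariant under a generic $U$ on Alice's side.
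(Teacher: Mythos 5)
Your analysis is correct, and the gap you isolated is real --- but it is a gap in the paper, not in your reasoning. The paper's own proof transforms the vectors exactly as you do (its Eq.~\ref{eq9}), and then asserts that the vanishing off-diagonal entries of $UU^{\dagger}=\mathbb{I}$, e.g. $x_{00}^{*}x_{01}+x_{10}^{*}x_{11}+x_{20}^{*}x_{21}=0$, \emph{are} the inner products $\braket{\chi_i|\chi_j}_B$. Under the stated hypothesis --- $\braket{\eta_i|\eta_j}=0$ for $i\neq j$ only --- the correct expression is the weighted sum you wrote, $\braket{\chi_1|\chi_0}=\sum_i x_{1i}^{*}x_{0i}\,\lVert\eta_i\rVert^{2}$, and unitarity annihilates it only when the three norms coincide. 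The paper's identification silently upgrades orthogonality to orthonormality, $\braket{\eta_i|\eta_j}=\delta_{ij}$, which for a normalized state means equal Schmidt weights; your reformulation via $\rho_A=\Tr_B\proj{\psi}$ is the same computation in different language, except that you carried it out correctly.

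Moreover, your suspicion that the unqualified statement fails for unequal lengths is confirmed by a two-line counterexample: take $\ket{\psi}=\sqrt{0.7}\,\ket{00}+\sqrt{0.2}\,\ket{11}+\sqrt{0.1}\,\ket{22}$, whose post-measurement vectors are mutually orthogonal, and let Alice apply $U\ket{0}=(\ket{0}+\ket{1})/\sqrt{2}$, $U\ket{1}=(\ket{0}-\ket{1})/\sqrt{2}$, $U\ket{2}=\ket{2}$. Then $\ket{\chi_0}=(\ket{\eta_0}+\ket{\eta_1})/\sqrt{2}$ and $\ket{\chi_1}=(\ket{\eta_0}-\ket{\eta_1})/\sqrt{2}$, so that $\braket{\chi_1|\chi_0}=\tfrac{1}{2}\left(\lVert\eta_0\rVert^{2}-\lVert\eta_1\rVert^{2}\right)=\tfrac{1}{4}\neq 0$. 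Hence Theorem~\ref{Th2} is true only with the additional hypothesis of equal norms (the maximally entangled case), exactly as your stabiliser analysis predicts: a diagonal $\rho_A$ is preserved under conjugation by \emph{every} unitary only when it is proportional to $\mathbb{I}$, and otherwise only by the monomial subgroup. With that hypothesis added, your observation $\rho_A\propto\mathbb{I}$ gives an immediate and correct proof; without it, no argument --- yours or the paper's --- can close the gap, because the statement is false as written.
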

\begin{proof}
: As stated in the proof of Theorem \ref{Th1}, only local unitary operation on the first subsystem can change the shape of the parallelepiped. If a local unitary operator U is applied on the first subsystem, the state transforms according to Eq. \ref{eq9}.
\begin{equation}
\begin{split}
(U\otimes \mathbb{I})\ket{\psi}=&\ket{0}_A(x_{00}\ket{\eta_{0}}+x_{01}\ket{\eta_{1}}+x_{02}\ket{\eta_{2}})_B\\
+&\ket{1}_A(x_{10}\ket{\eta_{0}}+x_{11}\ket{\eta_{1}}+x_{12}\ket{\eta_{2}})_B\\
+&\ket{2}_A(x_{20}\ket{\eta_{0}}+x_{21}\ket{\eta_{1}}+x_{22}\ket{\eta_{2}})_B     
\end{split}
\label{unitary}
\end{equation}
 Given the condition, $\braket{\eta_i|\eta_j}_B= 0$, for $i\neq j$, we need to show that $\braket{\chi_i|\chi_j}_B= 0$, for $i \neq j$.\\
$UU^\dagger = \begin{pmatrix}x_{00}&x_{01}&x_{02}\\x_{10}&x_{11}&x_{12}\\x_{20}&x_{21}&x_{22}\end{pmatrix}\begin{pmatrix}x_{00}^*&x_{10}^*&x_{20}^*\\x_{01}^*&x_{11}^*&x_{21}^*\\x_{02}^*&x_{12}^*&x_{22}^*\end{pmatrix}= \mathbb{I}$. Therefore, the off-diagonal terms are zero and we get the conditions:
\begin{equation}
    \begin{split}\label{Cond}
        x_{00}^*x_{01}+x_{10}^*x_{11}+x_{20}^*x_{21}=0=\braket{\chi_1|\chi_0}_B\\
        x_{00}^*x_{02}+x_{10}^*x_{12}+x_{20}^*x_{22}=0=\braket{\chi_0|\chi_2}_B\\
        x_{01}^*x_{02}+x_{11}^*x_{12}+x_{21}^*x_{22}=0=\braket{\chi_1|\chi_2}_B
    \end{split}
\end{equation}
Hence it is proved that the vectors remain orthogonal after applying local unitary to either of the subsystems.
\end{proof}
%-----------------------------------------------------------------------------------------------------------------------------------------------------------------------------------------------------------------------------------------------------------------------------------------------------------------------------------------------------------------------

\begin{theorem}
Three mutually non-orthogonal vectors are local unitarily equivalent to one pair or two pair of orthogonal vectors.
\end{theorem}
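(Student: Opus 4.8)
The plan is to work with the Gram matrix $G$ of the post-measurement vectors, $G_{ij}=\braket{\eta_i|\eta_j}$, which is Hermitian and, because the three vectors are non-planar, positive definite of rank three. As already exploited in the proofs of Theorems \ref{Th1} and \ref{Th2}, a local unitary $U$ on Alice's side sends $\ket{\eta_i}$ to the combinations $\ket{\chi_k}=\sum_j x_{kj}\ket{\eta_j}$ of Eq. \ref{eq9}, so the transformed overlaps are $\braket{\chi_k|\chi_l}=\sum_{ij}x_{ki}^{*}G_{ij}x_{lj}$; writing the rows of $U$ as orthonormal vectors $u_0,u_1,u_2\in\mathcal{C}^3$ this is simply $\braket{\chi_k|\chi_l}=u_k^{\dagger}Gu_l$. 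Since a local unitary on Bob's side leaves $G$ untouched, making a chosen pair orthogonal reduces to selecting orthonormal rows that annihilate one or two of these bilinear forms, and I would phrase the whole statement this way.

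First I would produce the two-pair form, which I expect to carry most of the theorem. If one row, say $u_1$, is taken to be an eigenvector of $G$ with eigenvalue $\lambda$, then for any orthonormal completion one has $u_0^{\dagger}Gu_1=\lambda\,u_0^{\dagger}u_1=0$ and $u_2^{\dagger}Gu_1=\lambda\,u_2^{\dagger}u_1=0$ automatically. Thus $\ket{\chi_1}$ becomes orthogonal to both $\ket{\chi_0}$ and $\ket{\chi_2}$, giving two orthogonal pairs, while the remaining overlap $u_0^{\dagger}Gu_2$, being the off-diagonal entry of $G$ restricted to the two-dimensional complement of $u_1$, stays nonzero for generic $u_0,u_2$ so the triple is not fully orthogonal.

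Next I would produce the one-pair form. To orthogonalize only the pair $(\chi_0,\chi_1)$ I would fix $u_0$ to be a generic unit vector that is not an eigenvector of $G$ and choose $u_1$ in the one-dimensional subspace of $\mathcal{C}^3$ orthogonal to both $u_0$ and $Gu_0$; this forces $u_0^{\dagger}Gu_1=(Gu_0)^{\dagger}u_1=0$ while leaving $u_0^{\dagger}Gu_2$ and $u_1^{\dagger}Gu_2$ nonzero for generic $u_0$. Completing to a unitary with any $u_2\perp u_0,u_1$ finishes the construction. I would then observe that demanding all three overlaps to vanish at once forces every row to be an eigenvector of $G$, i.e. the full diagonalization that defines the mutually orthogonal Schmidt class; this is exactly why a genuinely non-orthogonal, non-planar state terminates at one or two orthogonal pairs rather than three.

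The main obstacle, as in Theorems \ref{Th1} and \ref{Th2}, is the bookkeeping that turns these geometric choices of $u_0,u_1,u_2$ into admissible unitary entries $x_{kj}$ together with the genericity checks guaranteeing the leftover overlaps are nonzero, so that one does not silently fall into the planar or the fully orthogonal class. The delicate point I would treat separately is the degenerate-spectrum case of $G$: when two eigenvalues coincide the eigenvector used in the two-pair construction is not unique and, for a careless choice, the restricted off-diagonal entry can vanish and collapse the configuration to three orthogonal pairs, so the eigenvector must be drawn from the degenerate eigenspace to keep the third pair non-orthogonal.
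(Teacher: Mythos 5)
Your route is genuinely different from the paper's and, for the statement as literally written, it works. The paper fixes the special overlap pattern $l=m=0$, $n\neq 0$ (two orthogonal pairs), writes out the transformed overlaps $\braket{\chi_i|\chi_j}$ explicitly in the unitary entries, observes they are generically all nonzero, and then invokes ``and vice versa'' (plus a ``similarly'' for the one-pair case) to get the equivalence; that is, it argues backwards from the target configuration. You instead start from an arbitrary mutually non-orthogonal triple and construct the required unitaries directly: diagonalising one row against the Gram matrix $G$ yields the two-pair configuration, and the $u_1\perp\{u_0,Gu_0\}$ trick yields the one-pair configuration. This buys two things the paper's proof does not have: it treats an arbitrary starting Gram matrix rather than a single representative, and it keeps the diagonal entries $\braket{\eta_i|\eta_i}$ in the bookkeeping --- the paper's expressions in Eq.~\ref{N-O} silently drop these norm terms, which is only legitimate when all post-measurement vectors have equal length. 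Both proofs ultimately lean on the same genericity claim that the leftover overlaps do not accidentally vanish, and your degenerate-spectrum caveat is a real case the paper never considers.

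One step in your writeup is wrong, however: the claim that a non-orthogonal, non-planar triple ``terminates at one or two orthogonal pairs rather than three.'' By your own observation, all three overlaps vanish exactly when $u_0,u_1,u_2$ form an orthonormal eigenbasis of $G$ --- and the spectral theorem guarantees such a basis always exists for the Hermitian positive-definite $G$. So the full diagonalisation (the Schmidt form) is always reachable by a local unitary on Alice's side; nothing obstructs three orthogonal pairs. This does not damage the literal statement of the theorem, which only asserts equivalence to one- and two-pair configurations, but it contradicts the paper's intended reading of Theorems~\ref{Th2} and the present one as separating LU-inequivalent classes, and you should not present it as a consequence of your construction.
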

\begin{proof}
: Like in the previous cases, under the local unitary U on the first subsystem, the state takes the form of Eq. \ref{unitary}. Let $\braket{\eta_0|\eta_1}_B= l,\braket{\eta_1|\eta_2}_B= m,\braket{\eta_2|\eta_0}_B= n$. This gives:
\begin{equation}\label{N-O}
\begin{split}
\braket{\chi_0|\chi_1}_B=&x_{00}^*(lx_{11}+n^*x_{12})+x_{01}^*(l^*x_{10}+mx_{12})\\+&x_{02}^*(nx_{10}+m^*x_{11})\\
\braket{\chi_0|\chi_2}_B=&x_{00}^*(lx_{21}+n^*x_{22})+x_{01}^*(l^*x_{20}+mx_{22})\\+&x_{02}^*(nx_{20}+m^*x_{21})\\
\braket{\chi_1|\chi_2}_B=&x_{10}^*(lx_{21}+n^*x_{22})+x_{11}^*(l^*x_{20}+mx_{22})\\+&x_{12}^*(nx_{20}+m^*x_{21})
\end{split}
\end{equation}
Now, let $l = m = 0$, and $n\neq0$. Then from Eq. \ref{N-O}, $\braket{\chi_0|\chi_1}_B=n^*x_{00}^*x_{12}+nx_{02}^*x_{10}$, $\braket{\chi_0|\chi_2}_B=n^*x_{00}^*x_{22}+nx_{02}^*x_{20}$, and $\braket{\chi_1|\chi_2}_B=n^*x_{10}^*x_{22}+nx_{12}^*x_{20}$. In general, all three inner products are nonzero unless $n=0$. Hence, two pairs of orthogonal vectors can be local unitarily transformed into three mutually non-orthogonal vectors, and vice versa. Similarly, taking one orthogonal pair of vectors, it can be shown that local unitary operation can transform it to three mutually non-orthogonal vectors. Therefore, the proof is complete.
\end{proof}

It is easily seen that the new measure of entanglement based on wedge product of complex vectors construct different geometry to different entangled states. Under local unitary, the geometry of the post-measurement vectors can not be transformed arbitrarily. For pure states, local unitary transformation is equivalent to LOCC transformation. Since any of the three geometries have zero probability to be transformed into another under LU, they are not connected by stochastic LOCC. Hence, they constitute different entangled classes. These theorems encompass all possible geometry that can arise from the 3d geometry of the parallelepiped formed by the post-measurement vectors. Our results show that how simple 3d geometry can distinguish between the different classes of two qutrit entangled states.

The two qutrit pure states have nine basis elements. Taking linear combination of two or more terms, we present a case by case analysis for all possible states and their geometric shapes. We carry on the analysis in similar spirit as done by Pan et al. \cite{PanLu}. The geometrical shapes, entanglement maximization conditions and the entangled types are presented in table formats.

\subsection{Two term states}
The nine terms in the computational basis of two qutrits can be listed as $\left(\begin{smallmatrix}\ket{00} &\ket{01} &\ket{02} \\
\ket{10} &\ket{11} &\ket{12} \\
\ket{20} &\ket{21} &\ket{22} \end{smallmatrix}\right)$. \\Picking combination of any two terms from the same row or same column will not be entangled. Entangled states consisting any two terms have the same geometric shape. For the example state in Table \ref{two}, the post-measurement vectors are along the X and Y axis, and the shape is of a rectangle with the side lengths $|a|\text{, and }|b|$ respectively. The area is maximum, if each side is of equal length to make it a square. The entanglement maximization condition is thus $|a|=|q|$

\begin{widetext}

\begin{table}[htbp]
    \centering
    \begin{tabular}{|c|c|c|c|c|}
    \hline
    \textbf{Example}&\textbf{Max. Ent.}&\textbf{Figure}&\textbf{Geometry}&\textbf{Ent.}\\
    \textbf{State ($\ket{\psi}$)}&\textbf{Value}&\textbf{(Example)}&\textbf{Structure}&\textbf{Class}\\
    \hline
    \begin{minipage}{2.5 cm}
    $a\ket{00}+q\ket{11}$ \vspace{0.1 cm}\\
    $~~=\ket{0}_A(a\ket{0}_B)$\\
    $~~~~+\ket{1}_A(q\ket{1}_B)$
    \end{minipage}& 
    \begin{minipage}{2 cm}
    $E_{Gm}=\frac{1}{2}$ \vspace{0.2cm}\\
    \textbf{Condition:}\\
    $|a|=|q|$
    \end{minipage} &
    \begin{minipage}{5 cm}
    \vspace{0.1 cm}
      \includegraphics[width=1\linewidth,trim={2.5cm 1.5cm 2.5cm 2.5cm},clip]{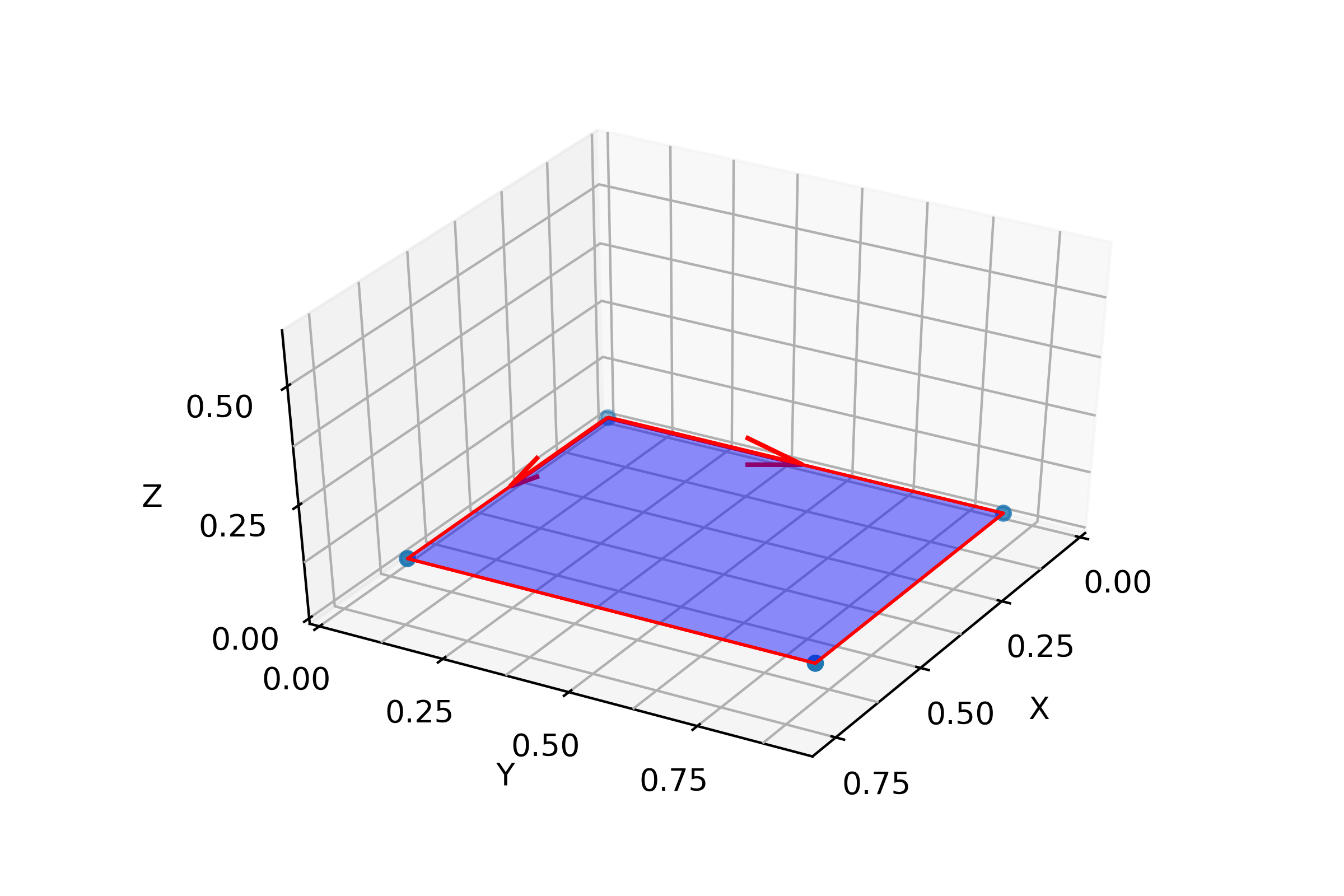}
    \end{minipage}&
    \begin{minipage}{2.5 cm}
     \textbf{Planer} \vspace{0.1 cm}\\
     No. of \\
     Orthogonal
     pair(Op):\\
     $Op=1$
    \end{minipage}
    &\textbf{Type I}\\
    \hline
    \end{tabular}
    \caption{Geometric structure and entanglement type of two term states}
    \label{two}
\end{table}
\end{widetext}
 
 \subsection{Three terms}
Picking three terms from the basis can be done in three inequivalent ways. Their geometric shape and entanglement types are shown in Table \ref{three}. From the geometrical shape, the maximum entanglement value found for planar shape is 0.5, but is 1 for non-planar structure. These two structures give two different types of entanglement classes. The planar structure corresponds to Type I entanglement class and the non-planar structure with three mutually orthogonal pair is of Type II entanglement class.

\begin{widetext}

\begin{table}[htbp]
    \centering
    \begin{tabular}{|c|c|c|c|c|}
    \hline
    \textbf{Example}&\textbf{Max. Ent.}&\textbf{Figure}&\textbf{Geometry}&\textbf{Ent.}\\
    \textbf{State ($\ket{\psi}$)}&\textbf{Value}&\textbf{(Example)}&\textbf{Structure}&\textbf{Class}\\
    \hline
    
    \begin{minipage}{4 cm}
    $a\ket{00}+q\ket{11}+z\ket{22}$ \vspace{0.1 cm}\\
    $~~=\ket{0}_A(a\ket{0}_B)+\ket{1}_A$\\
    $(q\ket{1}_B)+\ket{2}_A(q\ket{2}_B)$
    \end{minipage}& 
    \begin{minipage}{2.5 cm}
    $E_{Gm}=1$ \vspace{0.2cm}\\
    \textbf{Condition:}\\
    $|a|=|q|=|z|=\frac{1}{\sqrt{3}}$\\
    \end{minipage} &
    \begin{minipage}{5 cm}
    \vspace{0.1 cm}
      \includegraphics[width=1\linewidth,trim={2.5cm 2.5cm 2.5cm 2.5cm},clip]{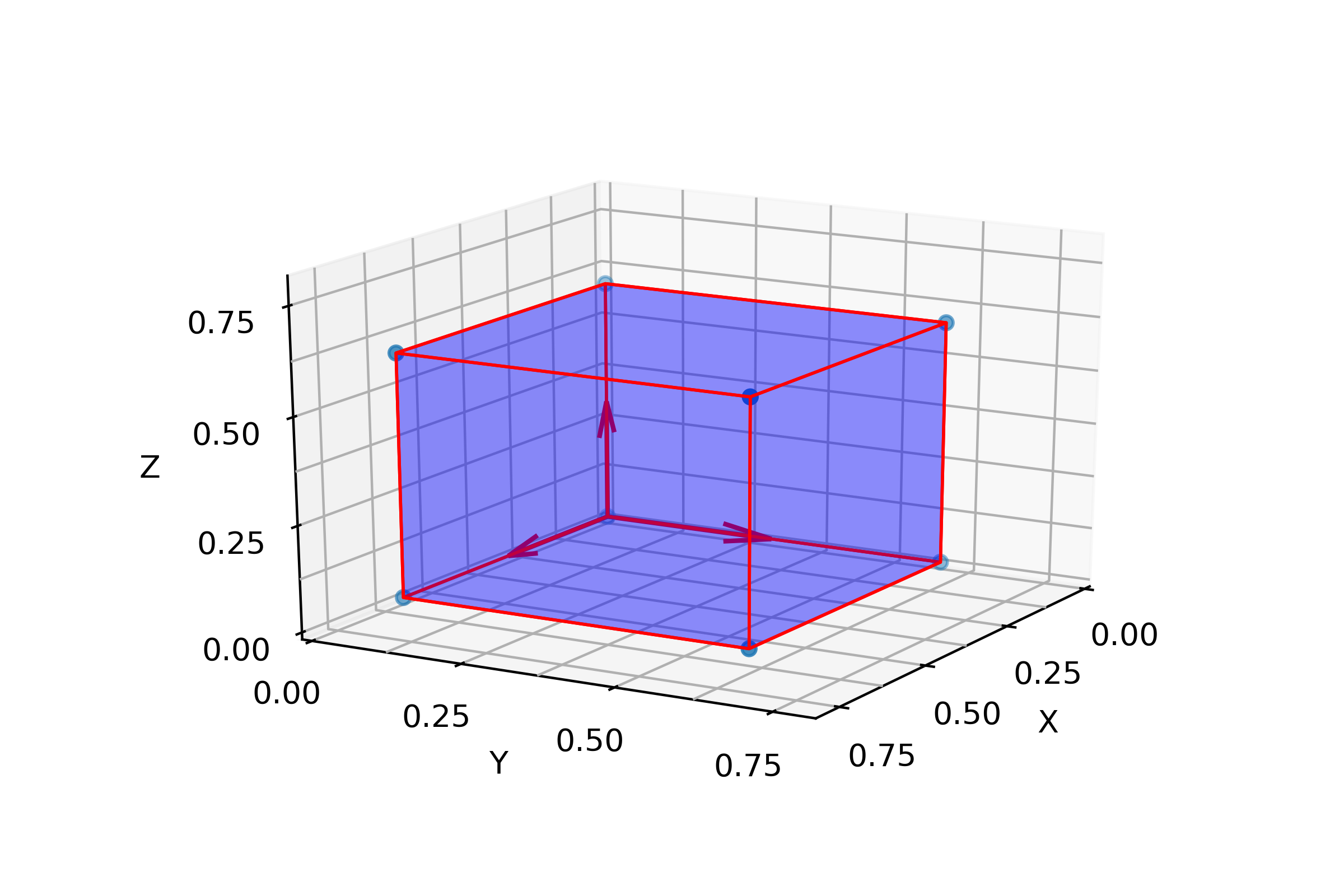}
    \end{minipage}&
    \begin{minipage}{2.5 cm}
    \textbf{Non-Planer} \vspace{0.1 cm}\\
     No. of
     Orthogonal
     pair(Op):\\
     $Op=3$
    \end{minipage}
    &\textbf{Type II}\\
    \hline
    
    \begin{minipage}{4 cm}
    $a\ket{00}+q\ket{01}+z\ket{10}$ \vspace{0.1 cm}\\
    $~~=\ket{0}_A(a\ket{0}_B+b\ket{1}_B)$\\
    $~~~~~~~~~+\ket{1}_A(p\ket{0}_B)$
    \end{minipage}& 
    \begin{minipage}{2.5 cm}
    $E_{Gm}<\frac{1}{2}$ \vspace{0.2cm}\\
    \textbf{Remarks:}\\
    Maximizes as $a \to 0$;\\
    $|b|^2,|p|^2$ $\rightarrow \frac{1}{2}$
    \end{minipage} &
    \begin{minipage}{5 cm}
    \vspace{0.1 cm}
      \includegraphics[width=1\linewidth,trim={2.5cm 2.5cm 2.5cm 2.5cm},clip]{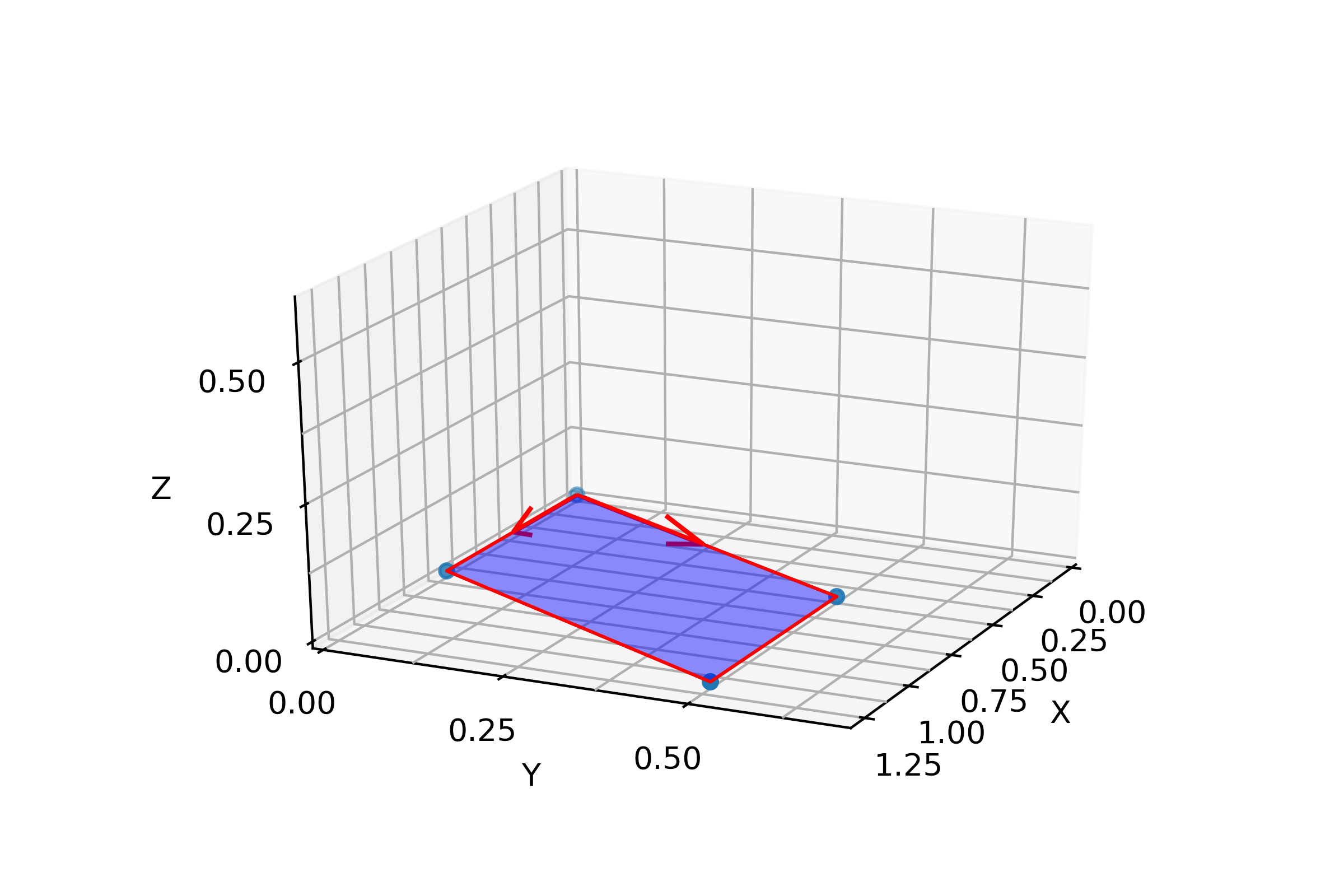}
    \end{minipage}&
    \begin{minipage}{2.5 cm}
    \textbf{Planer} \vspace{0.1 cm}\\
     No. of
     Orthogonal
     pair(Op):\\
     $Op=0$
    \end{minipage}
    &\textbf{Type I}\\
    \hline
    
    \begin{minipage}{4 cm}
    $a\ket{00}+q\ket{11}+z\ket{22}$ \vspace{0.1 cm}\\
    $~~=\ket{0}_A(a\ket{0}_B+b\ket{1}_B)$\\
    $+\ket{2}_A(z\ket{0}_B)$
    \end{minipage}& 
    \begin{minipage}{2.5 cm}
    $E_{Gm}=\frac{1}{2}$ \vspace{0.2cm}\\
    \textbf{Condition:}\\
    $|a|^2+|b|^2$\\$=|z|^2=\frac{1}{2}$\\
    \end{minipage} &
    \begin{minipage}{5 cm}
    \vspace{0.1 cm}
      \includegraphics[width=1\linewidth,trim={2.5cm 2.5cm 2.5cm 2.5cm},clip]{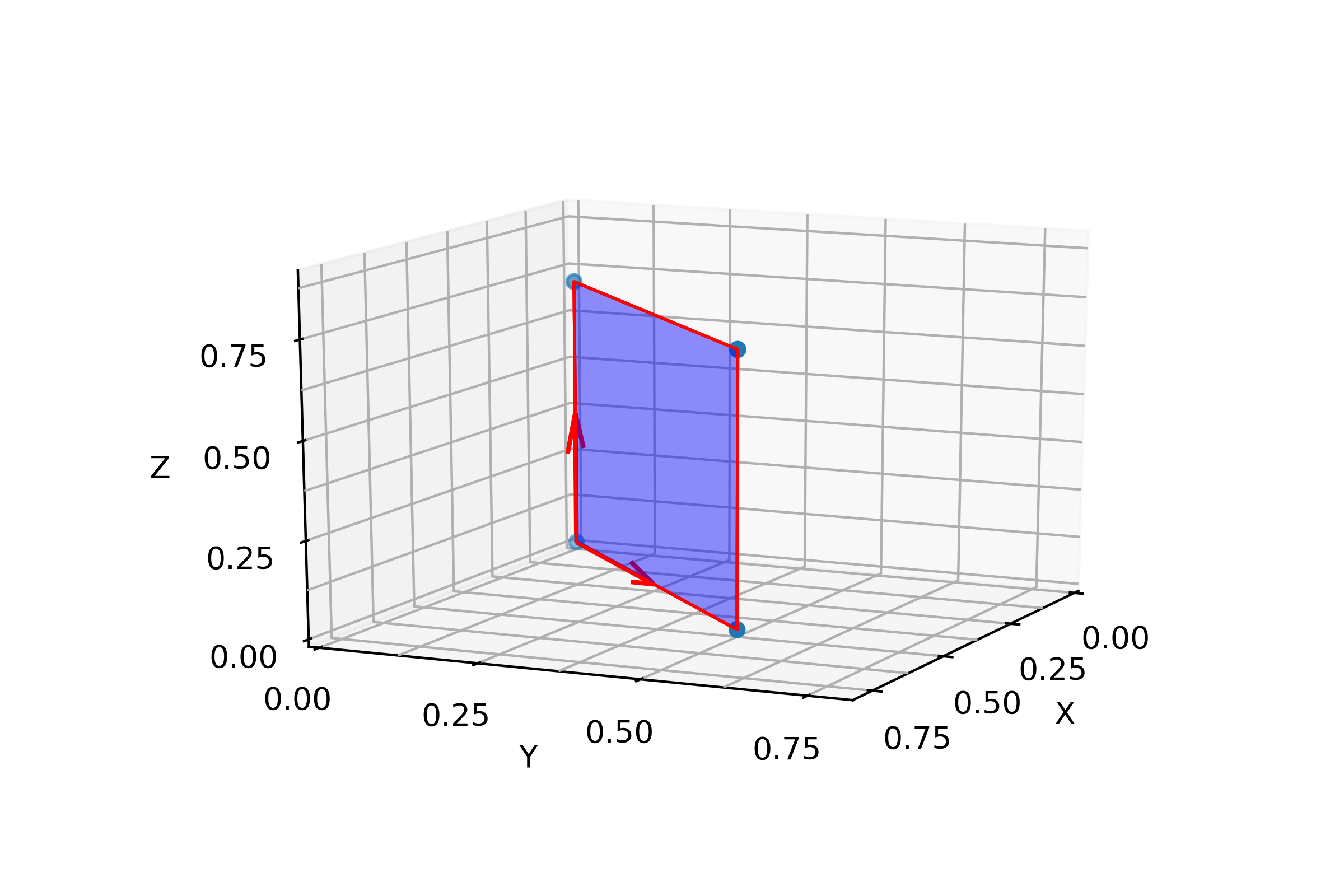}
    \end{minipage}&
    \begin{minipage}{2.5 cm}
    \textbf{Planer} \vspace{0.1 cm}\\
     No. of
     Orthogonal
     pair(Op):\\
     $Op=1$
    \end{minipage}
    &\textbf{Type I}\\
    \hline
    \end{tabular}
    \normalsize
    \caption{Geometric structure and entanglement type of three term states}
    \label{three}
\end{table}
\end{widetext}
\vspace{-0.5 cm}

\subsection{Four term states}
Four terms can be picked in 5 inequivalent ways. Example of each type of state and their properties are described in Table \ref{tab4}. Among these four term states we find a new type of entangled state (third row of Table \ref{tab4}). Geometrically this state has a non-planar shape with at least one pair of vectors non-orthogonal. This type of entangled states is named Type III entanglement.
\vspace{-0.5 cm}

\subsection{Five term states}
Five terms can be picked up in five inequivalent ways from the nine element product basis. Among all states consisting five terms, there is no new type of entanglement class. But there exists a special type of five term state (second row of Table \ref{tab5}) which can belong to any one of Type I, II, III entanglement class depending on the value of the coefficients. That state shows Type I entanglement when $\frac{|a|}{|p|}=\frac{|b|}{|q|}$ and type III for $a^*p+b^*q\neq 0$.
\vspace{-0.5 cm}
\subsection{Six Term States}
Six terms can be picked up in four inequivalent ways.Similar to five term states, no new classes are found from six term states also. But, here are some non-planer states with no mutually orthogonal post-measurement vectors (row two and three of Table \ref{six}). These states belong to Type III entangled states. 

States consisting seven basis terms or more give neither new classes nor new geometric shapes different from the ones already listed.
\begin{widetext}

\begin{table}[htbp]
    \centering
    \begin{tabular}{|c|c|c|c|c|}
    \hline
    \textbf{Example}&\textbf{Max. Ent.}&\textbf{Figure}&\textbf{Geometry}&\textbf{Ent.}\\
    \textbf{State ($\ket{\psi}$)}&\textbf{Value}&\textbf{(Example)}&\textbf{Structure}&\textbf{Class}\\
    \hline
    \begin{minipage}{4.25 cm}
    \small
    $a\ket{00}+b\ket{01}+c\ket{02}+p\ket{10}$ \vspace{0.1 cm}\\
    $=\ket{0}_A(a\ket{0}_B+b\ket{1}_B$\\
    $~~+c\ket{2}_B)+\ket{1}_A(p\ket{0}_B)$
    \end{minipage}& 
    \begin{minipage}{2.5 cm}
    \small
    $E_{Gm}<\frac{1}{2}$ \vspace{0.2cm}\\
    \textbf{Remarks:}\\
    Maximizes as $a\rightarrow 0; |b|^2+|c|^2,|p|^2\rightarrow\frac{1}{2}$
    \end{minipage} &
    \begin{minipage}{5 cm}
    \vspace{0.1 cm}
      \includegraphics[width=1\linewidth,trim={2.5cm 2.5cm 2.5cm 2.5cm},clip]{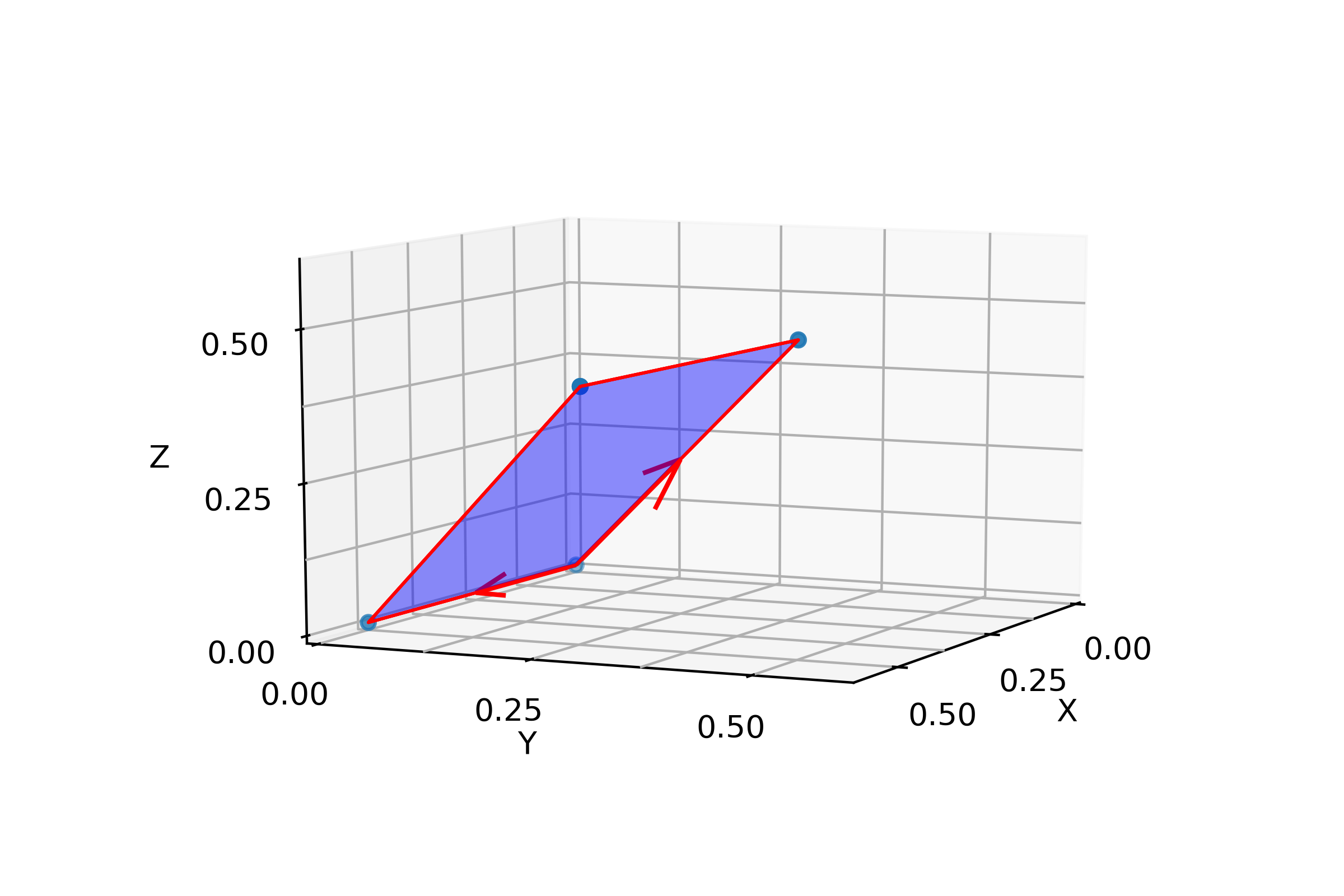}
    \end{minipage}&
    \begin{minipage}{2.5 cm}
    \small
    \textbf{Planer} \vspace{0.1 cm}\\
     No. of
     Orthogonal
     pair(Op):\\
     $Op=0$
    \end{minipage}
    &\textbf{Type I}\\
    \hline
    
    \begin{minipage}{4.25 cm}
    \small
    $a\ket{00}+b\ket{01}+p\ket{10}+r\ket{12}$ \vspace{0.1 cm}\\
    $=\ket{0}_A(a\ket{0}_B+b\ket{1}_B)$\\
    $~~+\ket{1}_A(p\ket{0}_B+r\ket{2}_B)$
    \end{minipage}& 
    \begin{minipage}{2.5 cm}
    \small
    $E_{Gm}<\frac{1}{2}$ \vspace{0.2cm}\\
    \textbf{Remarks:}\\
    Maximizes as $p \rightarrow 0 ; |a|^2+|b|^2,|r|^2 \rightarrow \frac{1}{2}$
    \end{minipage} &
    \begin{minipage}{5 cm}
    \vspace{0.1 cm}
      \includegraphics[width=1\linewidth,trim={2.5cm 2.5cm 2.5cm 2.5cm},clip]{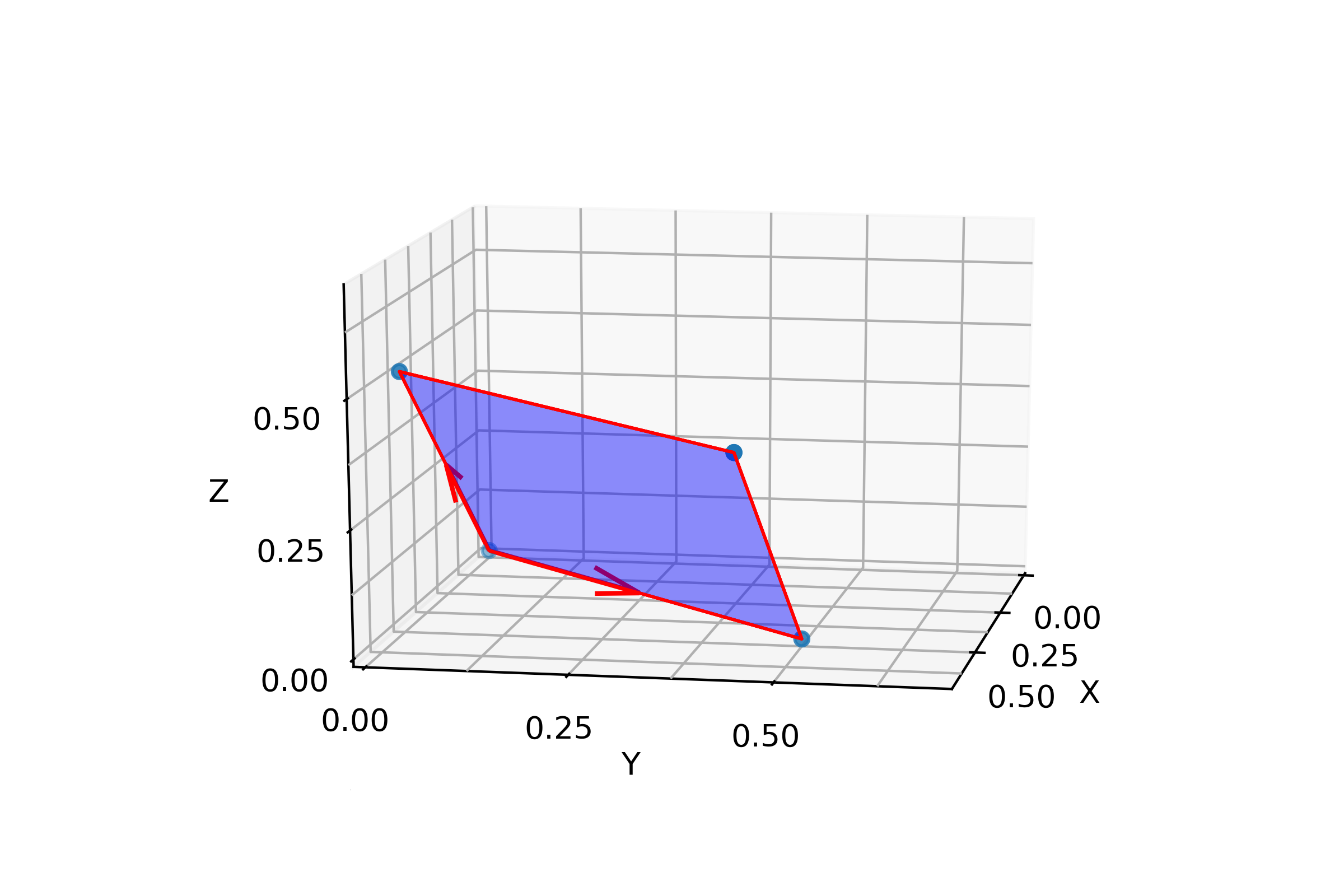}
    \end{minipage}&
    \begin{minipage}{2.5 cm}
    \small
    \textbf{Planer} \vspace{0.1 cm}\\
     No. of
     Orthogonal
     pair(Op):\\
     $Op=0$
    \end{minipage}
    &\textbf{Type I}\\
    \hline
    
    \begin{minipage}{4.25 cm}
    \small
    $a\ket{00}+b\ket{01}+p\ket{10}+z\ket{22}$ \vspace{0.1 cm}\\
    $=\ket{0}_A(a\ket{0}_B+b\ket{1}_B)+$\\$\ket{1}_A(p\ket{0}_B)+\ket{2}_A(z\ket{2}_B)$
    \end{minipage}& 
    \begin{minipage}{2.5 cm}
    \small
    $E_{Gm}<1$ \vspace{0.2cm}\\
    \textbf{Remarks:}\\
    Maximizes as $a \rightarrow 0; |b|^2,$\\$ |p|^2, |z|^2\rightarrow \frac{1}{3}$
    \end{minipage} &
    \begin{minipage}{5 cm}
    \vspace{0.1 cm}
      \includegraphics[width=1\linewidth,trim={2.5cm 2.5cm 2.5cm 2.5cm},clip]{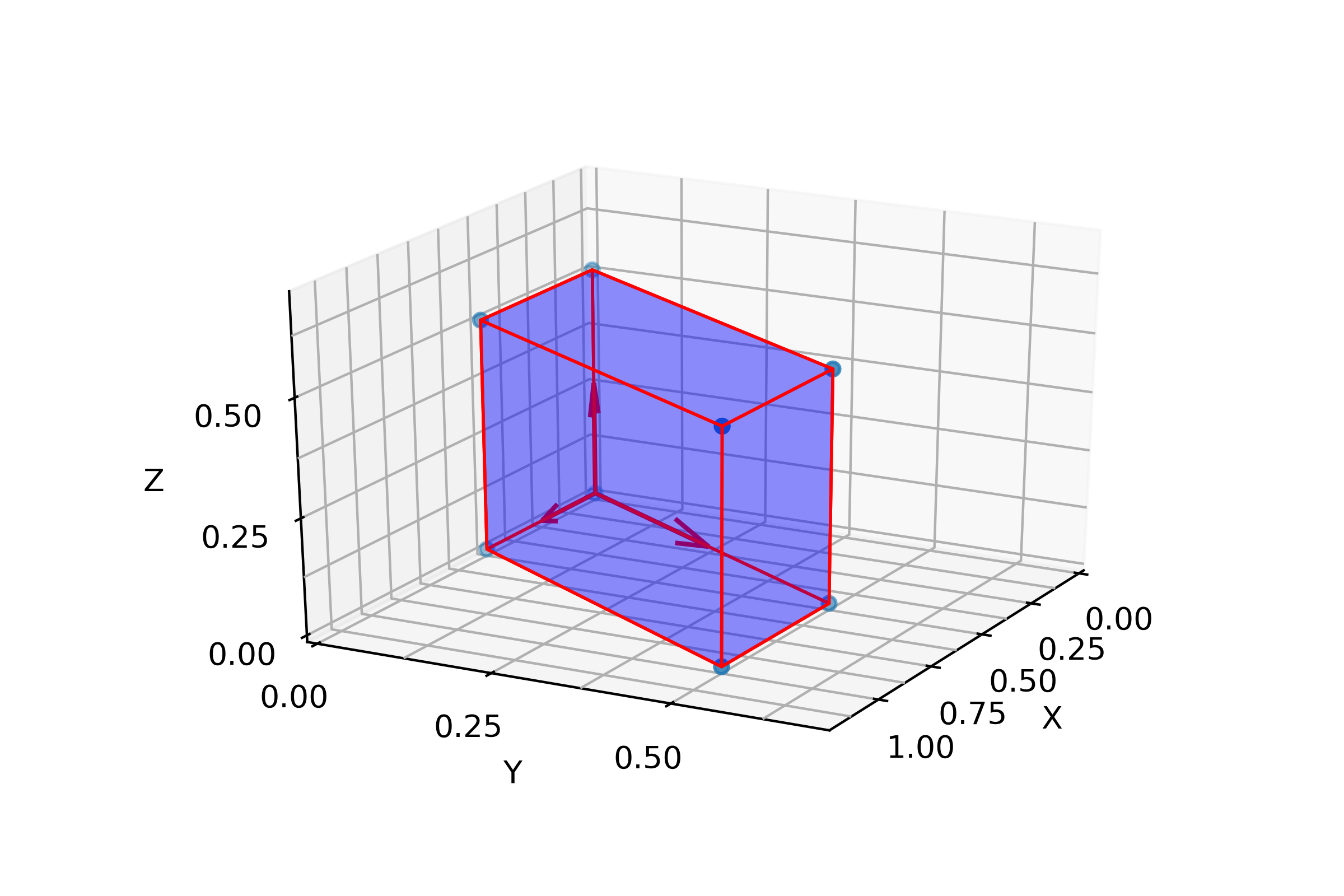}
    \end{minipage}&
    \begin{minipage}{2.5 cm}
    \small
    \textbf{Non-Planer} \vspace{0.1 cm}\\
     No. of
     Orthogonal
     pair(Op):\\
     $Op=2$
    \end{minipage}
    &\textbf{Type III}\\
    \hline
    
    \begin{minipage}{4.25 cm}
    \small
    $a\ket{00}+b\ket{01}+p\ket{10}+q\ket{11}$ \vspace{0.1 cm}\\
    $=\ket{0}_A(a\ket{0}_B+b\ket{1}_B)$\\
    $~~+\ket{1}_A(p\ket{0}_B+q\ket{1}_B)$
    \end{minipage}& 
    \begin{minipage}{2.5 cm}
    \small
    \vspace{0.1 cm}
    $E_{Gm}=\frac{1}{2}$ \vspace{0.1cm}\\
    \textbf{Condition:}\\
    $|a|^2+|b|^2=|p|^2+|q|^2=\frac{1}{2};$\\$ a^*p+b^*q=0$
    \vspace{0.1 cm}
    \end{minipage} &
    \begin{minipage}{5 cm}
    \vspace{0.1 cm}
      \includegraphics[width=1\linewidth,trim={2.5cm 2.5cm 2.5cm 2.5cm},clip]{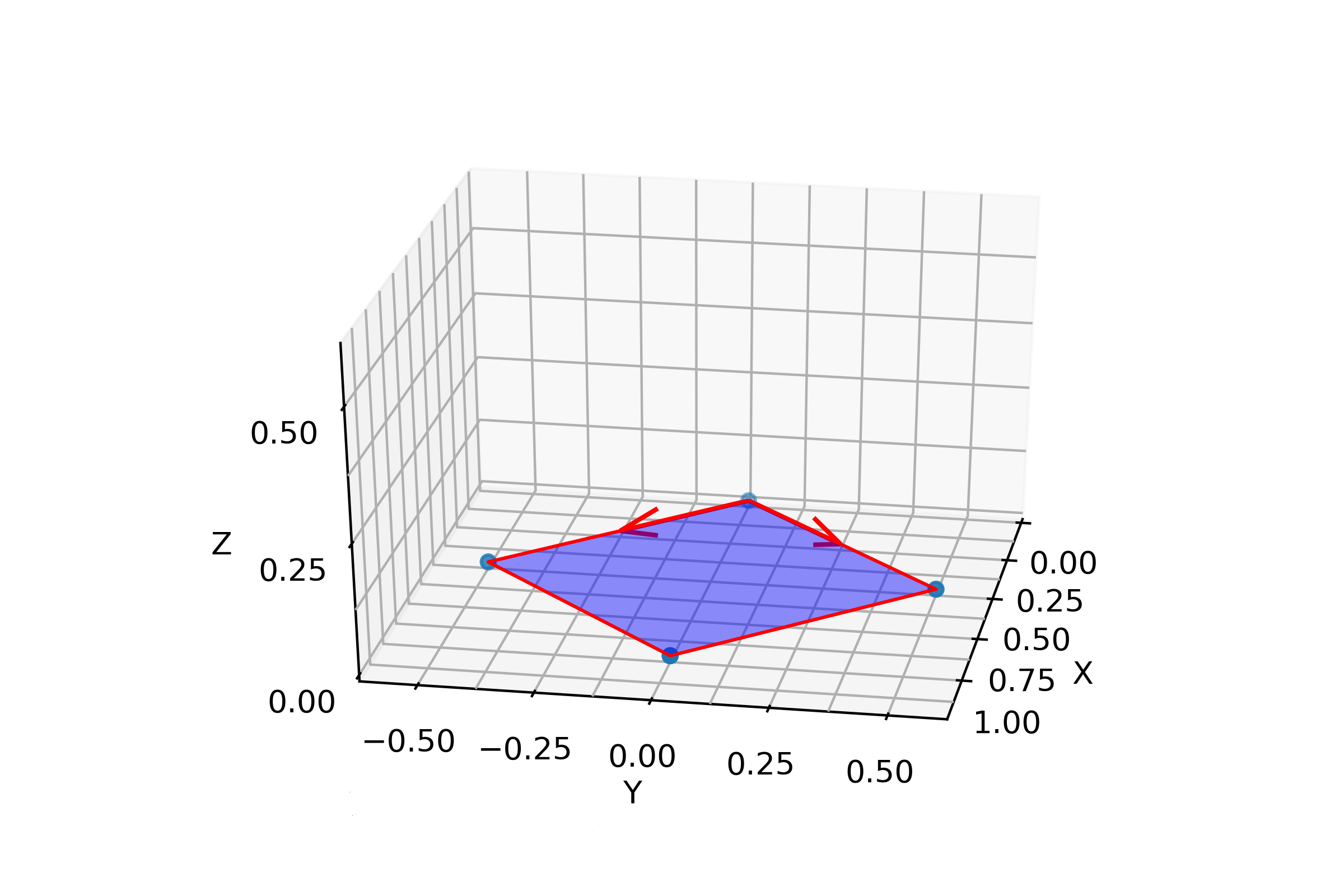}
    \end{minipage}&
    \begin{minipage}{2.5 cm}
    \small
    \textbf{Planer} \vspace{0.1 cm}\\
     No. of
     Orthogonal
     pair(Op):\\
     Op=0/1.
    \end{minipage}
    &\textbf{Type I}\\
    \hline
    
    \begin{minipage}{4.25 cm}
    \small
    $a\ket{00}+b\ket{01}+r\ket{12}+z\ket{22}$ \vspace{0.1 cm}\\
    $=\ket{0}_A(a\ket{0}_B+b\ket{1}_B)+$\\
    $\ket{1}_A(r\ket{2}_B)+\ket{2}_A(z\ket{2}_B)$
    \end{minipage}& 
    \begin{minipage}{2.5 cm}
    \small
    $E_{Gm}=\frac{1}{2}$ \vspace{0.2cm}\\
    \textbf{Condition:}\\
    $|a|^2+|b|^2=|r|^2+|z|^2=\frac{1}{2} $
    \end{minipage} &
    \begin{minipage}{5 cm}
    \vspace{0.1 cm}
      \includegraphics[width=1\linewidth,trim={2.5cm 2.5cm 2.5cm 2.5cm},clip]{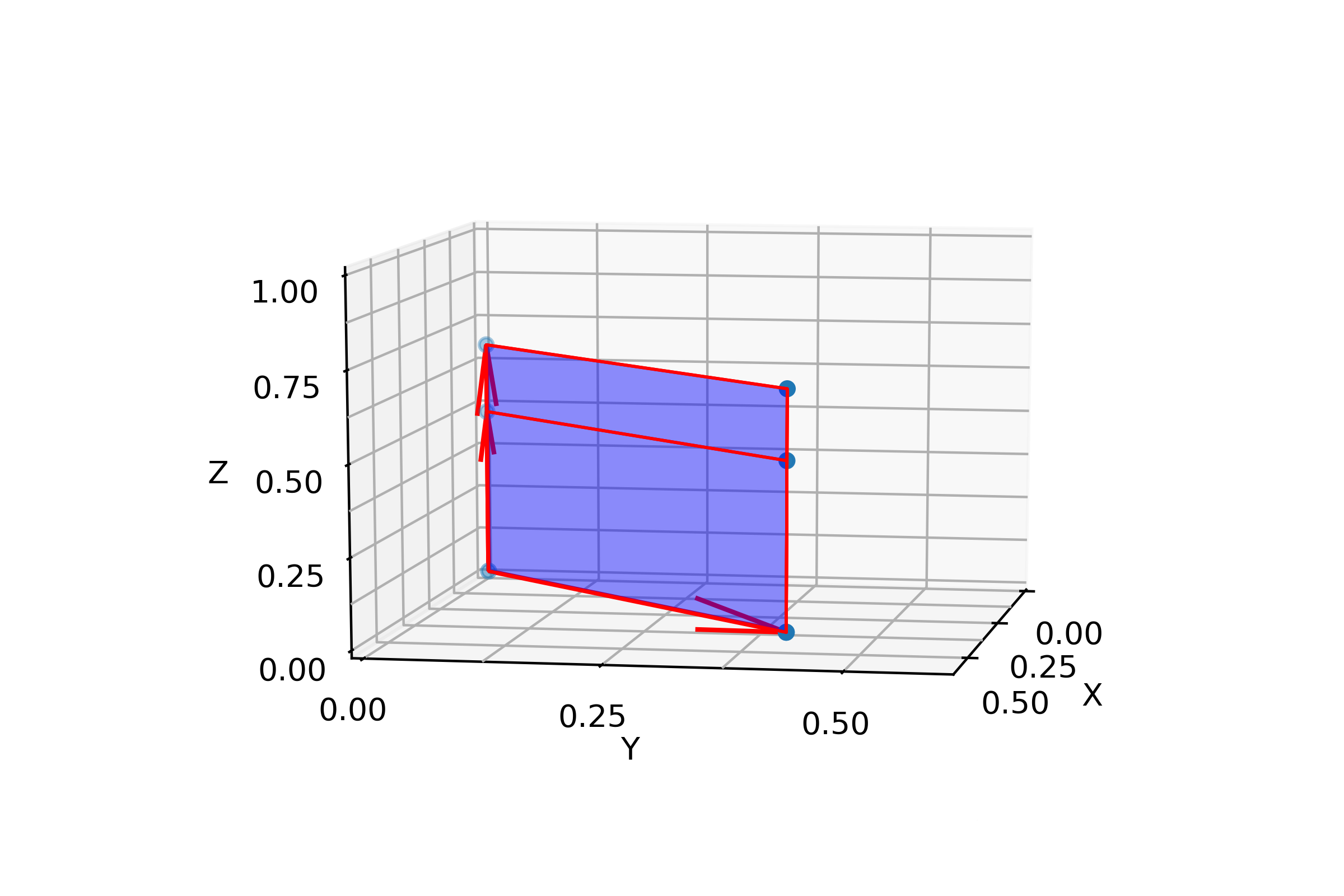}
    \end{minipage}&
    \begin{minipage}{2.5 cm}
    \small
    \textbf{Planer} \vspace{0.1 cm}\\
     No. of
     Orthogonal
     pair(Op):\\
     $Op=2$
    \end{minipage}
    &\textbf{Type I}\\
    \hline
    \end{tabular}
    \caption{Geometric structure and entanglement type of four term states}
    \label{tab4}
\end{table}

\begin{table}[htbp]
    \centering
    \begin{tabular}{|c|c|c|c|c|}
    \hline
    \textbf{Example}&\textbf{Max. Ent.}&\textbf{Figure}&\textbf{Geometry}&\textbf{Ent.}\\
    \textbf{State ($\ket{\psi}$)}&\textbf{Value}&\textbf{(Example)}&\textbf{Structure}&\textbf{Class}\\
    \hline
    \begin{minipage}{4.25 cm}
    \small
    $a\ket{00}+b\ket{01}+c\ket{02}+p\ket{10}+q\ket{11}$ \vspace{0.1 cm}\\
    $=\ket{0}_A(a\ket{0}_B+b\ket{1}_B+$\\
    $c\ket{2}_B)+\ket{1}_A(p\ket{0}_B+q\ket{1}_B)$
    \end{minipage}&
    \begin{minipage}{2.5 cm}
    \small
    $E_{Gm}=\frac{1}{2}$ \vspace{0.1cm}\\
    \textbf{Condition:}\\
    $|a|^2+|b|^2+$\\$|c|^2=|p|^2$\\$+|q|^2=\frac{1}{2},$ \\ $a^*p+b^*q=0$.
    \end{minipage} &
    \begin{minipage}{5 cm}
    \vspace{0.05 cm}
      \includegraphics[width=1\linewidth,trim={2.5cm 2.5cm 2.5cm 2.5cm},clip]{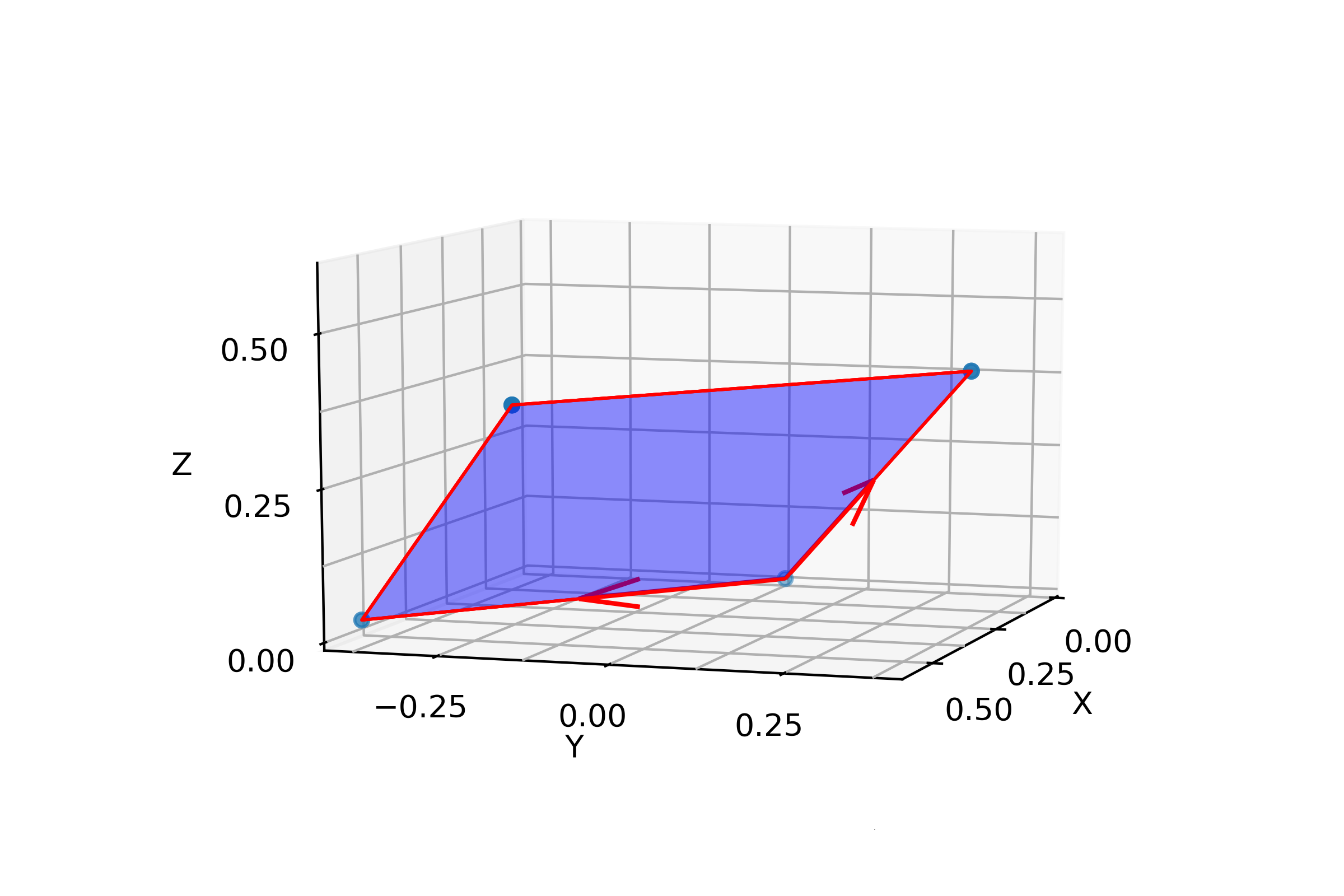}
    \end{minipage}&
    \begin{minipage}{2.5 cm}
    \small
    \textbf{Planer} \vspace{0.1 cm}\\
     No. of
     Orthogonal
     pair(Op):\\
     Op=0/1.
    \end{minipage}
    &\textbf{Type I}\\
    \hline
    
    \begin{minipage}{4.25 cm}
    \small
    $a\ket{00}+b\ket{01}+p\ket{10}+q\ket{11}+z\ket{22}$ \vspace{0.1 cm}\\
    $~~=\ket{0}_A(a\ket{0}_B+b\ket{1}_B)$\\
    $+\ket{1}_A(p\ket{0}_B+q\ket{1}_B)$\\
    $+\ket{2}_A(\ket{2}_B)$
    \end{minipage}&
    \begin{minipage}{2.5 cm}
    \small
    $E_{Gm}=1$ \vspace{0.1cm}\\
    \textbf{Condition:}\\
    $|a|^2+|b|^2=|p|^2+|q|^2=|z|^2=\frac{1}{\sqrt{3}},$\\ $a^*p+b^*q=0$.
    \end{minipage} &
    \begin{minipage}{5 cm}
    \vspace{0.05 cm}
      \includegraphics[width=1\linewidth,trim={2.5cm 2.5cm 2.5cm 2.5cm},clip]{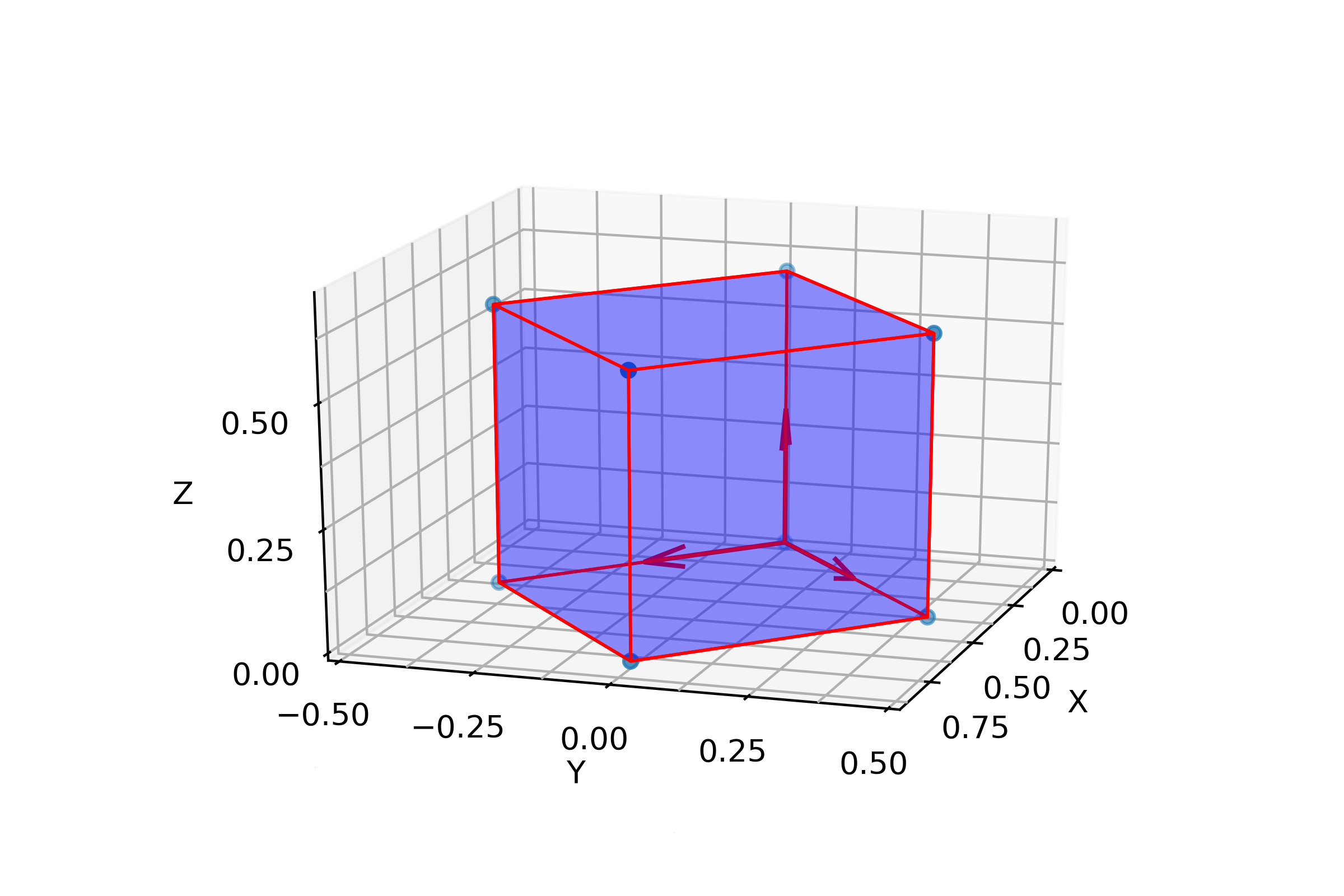}
    \end{minipage}&
    \begin{minipage}{2.5 cm}
    \small \vspace{0.2 cm}
    \textbf{Planer/ Non-Planer} \vspace{0.1 cm}\\
     No. of
     Orthogonal
     pair(Op):\\
     Op=2/3.\\
    \end{minipage}&
    \begin{minipage}{1.55 cm}
    \textbf{Type I/}\\
    \textbf{Type II/}\\
    \textbf{Type III}
    \end{minipage}\\
    \hline
    
    \begin{minipage}{4.25 cm}
    \small
    $a\ket{00}+b\ket{01}+c\ket{02}+p\ket{10}+x\ket{20}$ \vspace{0.1 cm}\\
    $~~=\ket{0}_A(a\ket{0}_B+b\ket{1}_B$\\
    $+c\ket{2}_B)+\ket{1}_A(p\ket{0}_B)$\\
    $+\ket{2}_A(x\ket{0}_B)$
    \end{minipage}&
    \begin{minipage}{2.5 cm}
    \small
    $E_{Gm}<\frac{1}{2}$ \vspace{0.1cm}\\
    \textbf{Remarks:}\\
    Maximizes as $a\rightarrow 0$ ; \\ $|b|^2+|c|^2,$\\$|p|^2+|x|^2\rightarrow\frac{1}{2}$
    \end{minipage} &
    \begin{minipage}{5 cm}
    \vspace{0.05 cm}
      \includegraphics[width=1\linewidth,trim={2.5cm 2.5cm 2.5cm 2.5cm},clip]{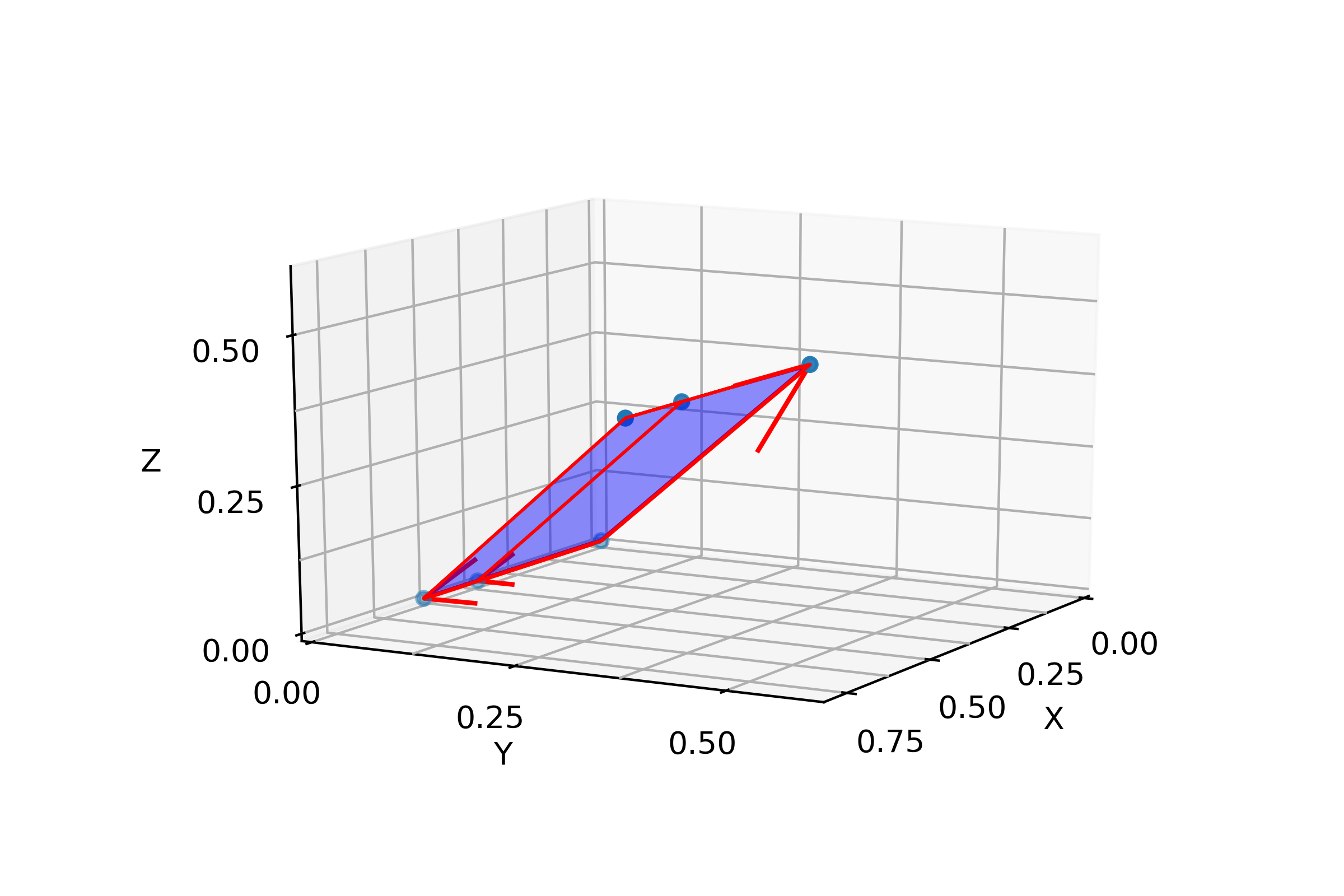}
    \end{minipage}&
    \begin{minipage}{2.5 cm}
    \small
    \textbf{Planer} \vspace{0.1 cm}\\
     No. of
     Orthogonal
     pair(Op):\\
     Op=2.
    \end{minipage}
    &\textbf{Type I}\\
    \hline
    
    \begin{minipage}{4.25 cm}
    \small
    $a\ket{00}+b\ket{01}+c\ket{02}+p\ket{10}+y\ket{21}$ \vspace{0.1 cm}\\
    $~~=\ket{0}_A(a\ket{0}_B+b\ket{1}_B$\\
    $+c\ket{2}_B)+\ket{1}_A(p\ket{0}_B)$\\
    $+\ket{2}_A(y\ket{1}_B)$
    \end{minipage}&
    \begin{minipage}{2.5 cm}
    \small
    $E_{Gm}<1$ \vspace{0.1cm}\\
    \textbf{Remarks:}\\
    Maximizes as $a,b\rightarrow 0$; \\ $ |c|^2,|p|^2,|y|^2\rightarrow \frac{1}{3}$
    \end{minipage} &
    \begin{minipage}{5 cm}
    \vspace{0.05 cm}
      \includegraphics[width=1\linewidth,trim={2.5cm 2.5cm 2.5cm 2.5cm},clip]{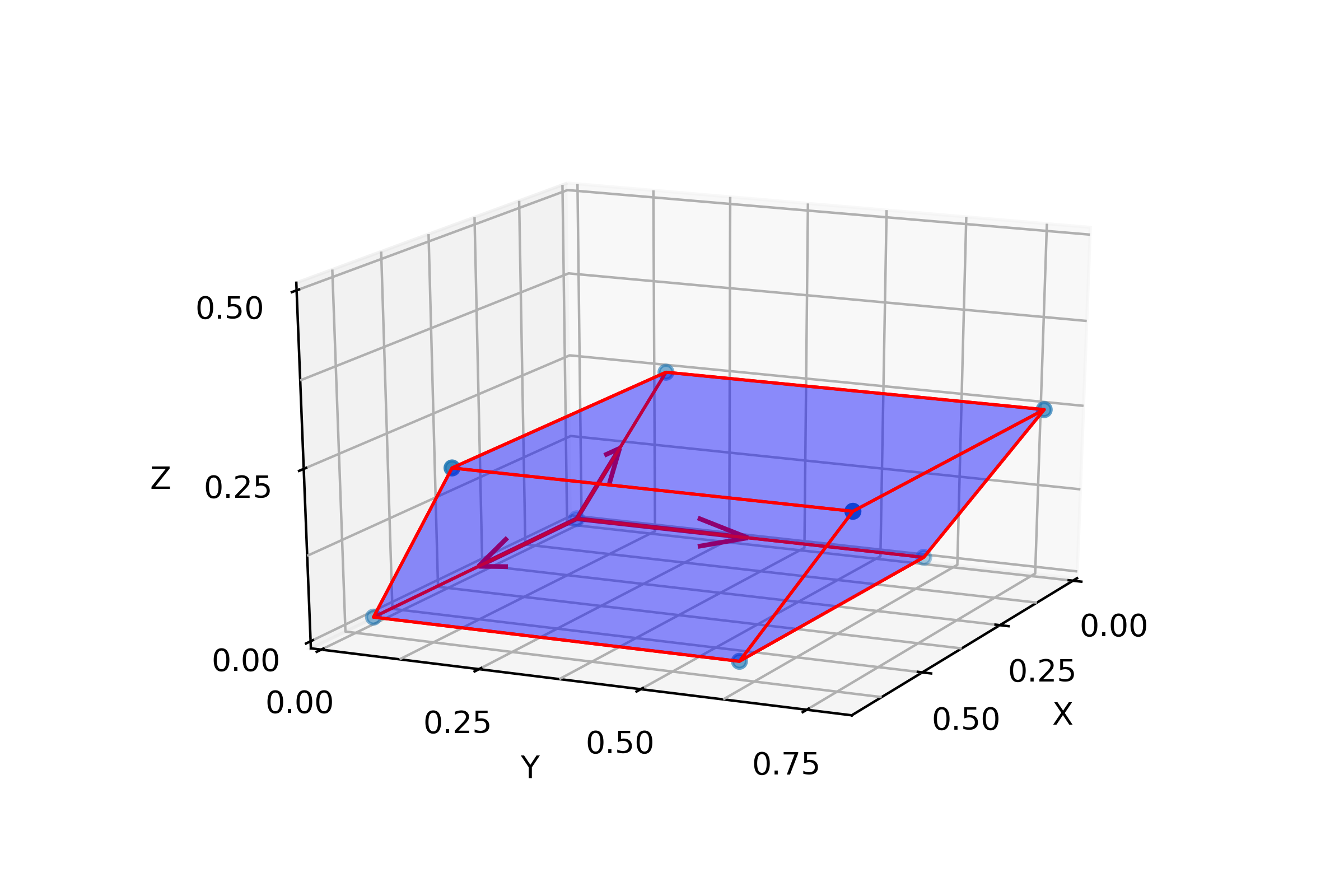}
    \end{minipage}&
    \begin{minipage}{2.5 cm}
    \small
    \textbf{Non-Planer} \vspace{0.1 cm}\\
     No. of
     Orthogonal
     pair(Op):\\
     Op=1.
    \end{minipage}
    &\textbf{Type III}\\
    \hline
    
    \begin{minipage}{4.25 cm}
    \small
    $a\ket{00}+b\ket{01}+p\ket{10}+r\ket{12}+y\ket{21}$ \vspace{0.1 cm}\\
    $~~=\ket{0}_A(a\ket{0}_B+b\ket{1}_B)$\\
    $+\ket{1}_A(p\ket{0}_B+r\ket{2}_B)$\\
    $+\ket{2}_A(y\ket{1}_B)$
    \end{minipage}&
    \begin{minipage}{2.5 cm}
    \small
    $E_{Gm}<1$ \vspace{0.1cm}\\
    \textbf{Remarks:}\\
    Maximizes as $b,p\rightarrow 0$; \\ $ |a|^2,|r|^2,|y|^2\rightarrow \frac{1}{3}$
    \end{minipage} &
    \begin{minipage}{5 cm}
    \vspace{0.05 cm}
      \includegraphics[width=1\linewidth,trim={2.5cm 2.5cm 2.5cm 2.5cm},clip]{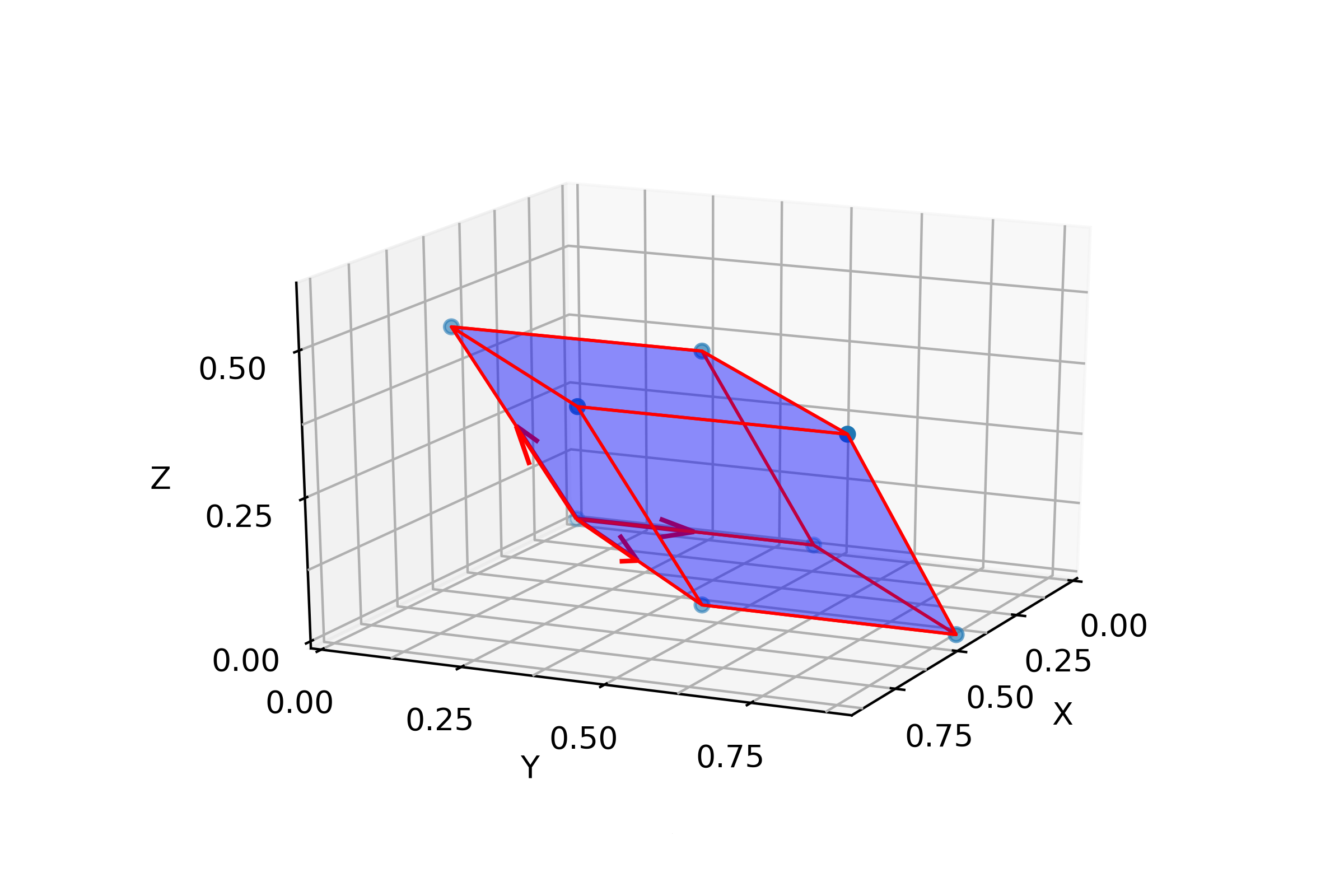}
    \end{minipage}&
    \begin{minipage}{2.5 cm}
    \small
    \textbf{Non-Planer} \vspace{0.1 cm}\\
     No. of
     Orthogonal
     pair(Op):\\
     Op=1.
    \end{minipage}
    &\textbf{Type III}\\
    \hline
    
    \begin{minipage}{4.25 cm}
    \small
    $a\ket{00}+b\ket{01}+p\ket{10}+r\ket{12}+z\ket{22}$ \vspace{0.1 cm}\\
    $~~=\ket{0}_A(a\ket{0}_B+b\ket{1}_B)$\\
    $+\ket{1}_A(p\ket{0}_B+r\ket{2}_B)$\\
    $+\ket{2}_A(z\ket{2}_B)$
    \end{minipage}&
    \begin{minipage}{2.5 cm}
    \small
    $E_{Gm}<1$ \vspace{0.1cm}\\
    \textbf{Remarks:}\\
    Maximizes as $a,r\rightarrow 0$; \\ $ |b|^2,|p|^2,|z|^2\rightarrow \frac{1}{3}$
    \end{minipage} &
    \begin{minipage}{5 cm}
    \vspace{0.05 cm}
      \includegraphics[width=1\linewidth,trim={2.5cm 2.5cm 2.5cm 2.5cm},clip]{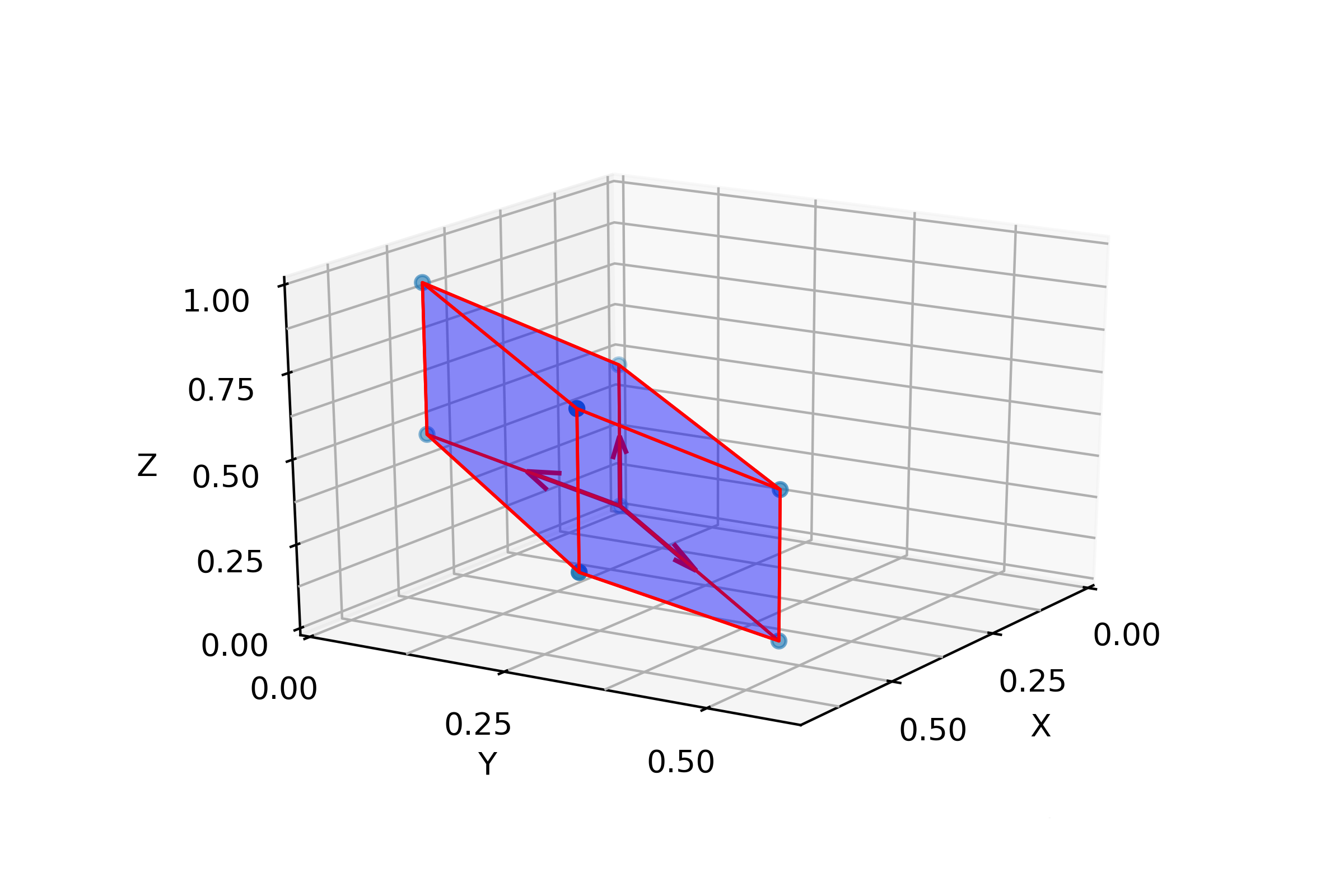}
    \end{minipage}&
    \begin{minipage}{2.5 cm}
    \small
    \textbf{Non-Planer} \vspace{0.1 cm}\\
     No. of
     Orthogonal
     pair(Op):\\
     Op=1.
    \end{minipage}
    &\textbf{Type III}\\
    \hline
    \end{tabular}
    \caption{Geometric structure and entanglement type of five term states}
    \label{tab5}
\end{table}
\end{widetext}

\section{Generalization to Higher Dimension}\label{section 5}
In this section we extend our entanglement measure for multi qudit systems. First, we present the entanglement measure for two qudit systems. Any two qudit system of dimension $d_1$ and $d_2$ respectively can be represented as $\ket{\psi}=\sum_{i,j=0}^{d_1-1,d_2-1}a_{ij}\ket{i}_A\ket{j}_B$ where $a_{ij}$ are complex coefficients and i and j are corresponding basis elements. In terms of post measurement vectors it can be rewritten as $\ket{\psi}=\sum_{i=0}^{d_1-1}\ket{i}_A\ket{\gamma_i}_B$ where $\ket{\gamma_i}=\sum_{j=0}^{d_2-1}a_{ij}\ket{j}$ are the post-measurement vectors. Hence, the entanglement measure for these bipartite qudit is given by:
\begin{equation}
    \small
        E_G=|\wedge_{i=0}^{d_1-1}\ket{\gamma_i}^2+\sum_{i=0}^{d_1-1}\wedge_{j\neq i}^{d_1-1}\ket{\gamma_j}^2+\cdots+\frac{1}{2}\sum_{i,j\neq 0}^{d_1-1}|\ket{\gamma_i}\wedge\ket{\gamma_j}|^2
\end{equation}
where we define, $\wedge_{k=0}^p\ket{k}=\ket{0}\wedge\ket{1}\wedge\cdots\wedge\ket{p}$ and $\wedge_{k\neq l}^p\ket{k}=\ket{0}\wedge \cdots \wedge \ket{l-1}\wedge\ket{l+1}\wedge\cdots\ket{p}$. 

Consider any n-qudit pure state $\ket{\psi}$ of arbitrary dimension. It can be represented as:
\begin{equation}
    \ket{\psi}=\sum_{i_1,i_2,\cdots,i_n=0}^{d_1-1,d_2-1,\cdots ,d_n-1}a_{i_1\cdots i_n}\ket{i_1}\cdots \ket{i_n}
\end{equation}
Here, $\ket{i_i}$ are the basis elements for i-th qudit of dimension $d_i$. Consider any bipartition of $(m|n-m)$ and we look into the post-measurement vectors in the $(n-m)$ subsystem. 

\begin{widetext}

\begin{table}[htbp]\label{six}
\centering
    \begin{tabular}{|c|c|c|c|c|}
    \hline
    \textbf{Example}&\textbf{Max. Ent.}&\textbf{Figure}&\textbf{Geometry}&\textbf{Ent.}\\
    \textbf{State ($\ket{\psi}$)}&\textbf{Value}&\textbf{(Example)}&\textbf{Structure}&\textbf{Class}\\
    \hline
    \begin{minipage}{4 cm}
    \small
    $a\ket{00}+b\ket{01}+c\ket{02}+p\ket{10}+q\ket{11}+r\ket{12}$ \vspace{0.1 cm}\\
    $~~=\ket{0}_A(a\ket{0}_B+b\ket{1}_B$\\
    $+c\ket{2}_B)+\ket{1}_A(p\ket{0}_B$\\
    $+q\ket{1}_B+r\ket{2}_B)$
    \end{minipage}&
    \begin{minipage}{2.5 cm}
    \small
    \vspace{0.2 cm}
    $E_{Gm}=\frac{1}{2}$ \vspace{0.1cm}\\
    \textbf{Condition:}\\
    $|a|^2+|b|^2+$\\$|c|^2=|p|^2$\\$+|q|^2+|r|^2$\\$=\frac{1}{2};$ $a^*p+$\\$b^*q+c^*r=0$.\vspace{0.2 cm}
    \end{minipage} &
    \begin{minipage}{5 cm}
    \vspace{0.1 cm}
      \includegraphics[width=1\linewidth,trim={2.5cm 2.5cm 2.5cm 2.5cm},clip]{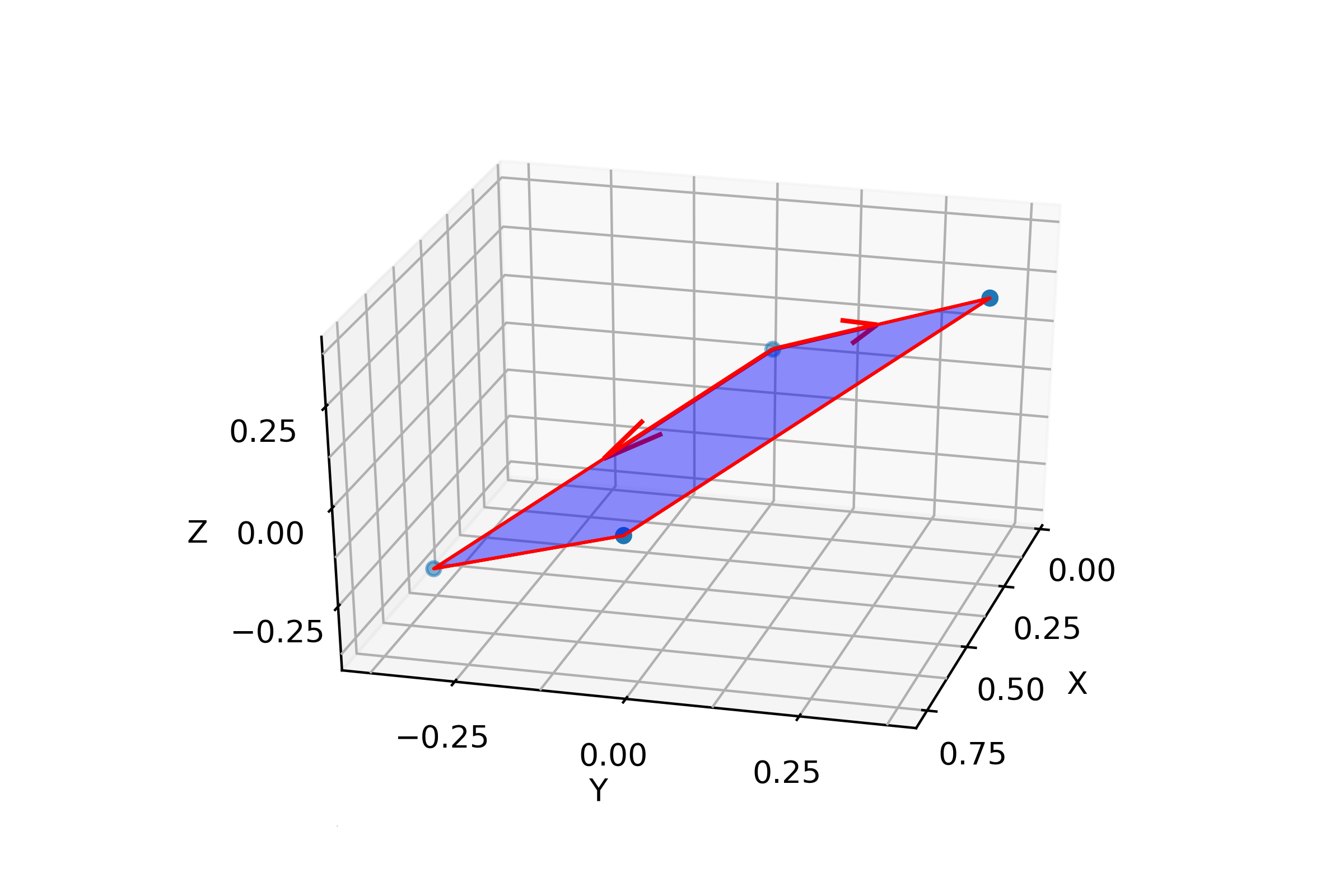}
    \end{minipage}&
    \begin{minipage}{2.5 cm}
    \small
    \textbf{Planer} \vspace{0.1 cm}\\
     No. of
     Orthogonal
     pair(Op):\\
     Op=0/1.
    \end{minipage}
    &\textbf{Type I}\\
    \hline
    
    \begin{minipage}{4 cm}
    \small
    $a\ket{00}+b\ket{01}+p\ket{10}+r\ket{12}+y\ket{21}+z\ket{22}$ \vspace{0.1 cm}\\
    $~~=\ket{0}_A(a\ket{0}_B+b\ket{1}_B)$\\
    $+\ket{1}_A(p\ket{0}_B+r\ket{2}_B)$\\
    $+\ket{2}_A(y\ket{1}_B+z\ket{2}_B)$
    \end{minipage}&
    \begin{minipage}{2.5 cm}
    \small
    $E_{Gm}<1$ \vspace{0.1cm}\\
    \textbf{Remarks:}\\
    Maximizes as $b,p,x\rightarrow $\\$ 0$; $ |a|^2,|r|^2,$\\$|y|^2\rightarrow \frac{1}{3}$
    \end{minipage} &
    \begin{minipage}{5 cm}
    \vspace{0.1 cm}
      \includegraphics[width=1\linewidth,trim={2.5cm 2.5cm 2.5cm 2.5cm},clip]{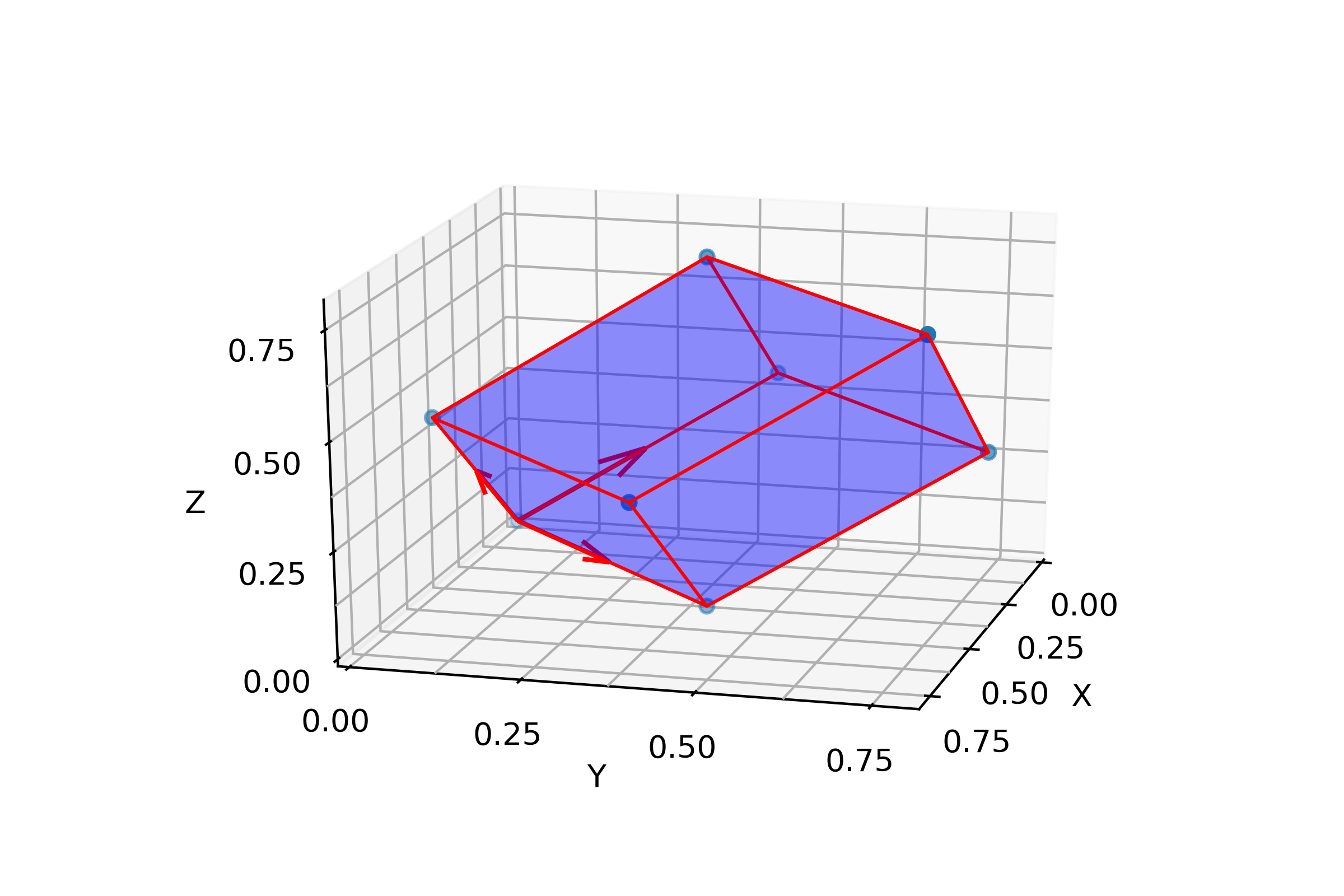}
    \end{minipage}&
    \begin{minipage}{2.5 cm}
    \small
    \textbf{Non-Planer} \vspace{0.1 cm}\\
     No. of
     Orthogonal
     pair(Op):\\
     Op=0.
    \end{minipage}
    &\textbf{Type III}\\
    \hline
    
    \begin{minipage}{4 cm}
    \small
    $a\ket{00}+b\ket{01}+c\ket{02}+p\ket{10}+q\ket{11}+x\ket{20}$ \vspace{0.1 cm}\\
    $~~=\ket{0}_A(a\ket{0}_B+b\ket{1}_B$\\
    $+c\ket{2}_B)+\ket{1}_A(p\ket{0}_B$\\
    $+q\ket{1}_B)+\ket{2}_A(x\ket{0}_B)$
    \end{minipage}&
    \begin{minipage}{2.5 cm}
    \small
    $E_{Gm}<1$ \vspace{0.1cm}\\
    \textbf{Remarks:}\\
    Maximizes as $a,b,p\rightarrow $\\$ 0$; $ |c|^2,|q|^2,$\\$|x|^2\rightarrow \frac{1}{3}$
    \end{minipage} &
    \begin{minipage}{5 cm}
    \vspace{0.1 cm}
      \includegraphics[width=1\linewidth,trim={2.5cm 2.5cm 2.5cm 2.5cm},clip]{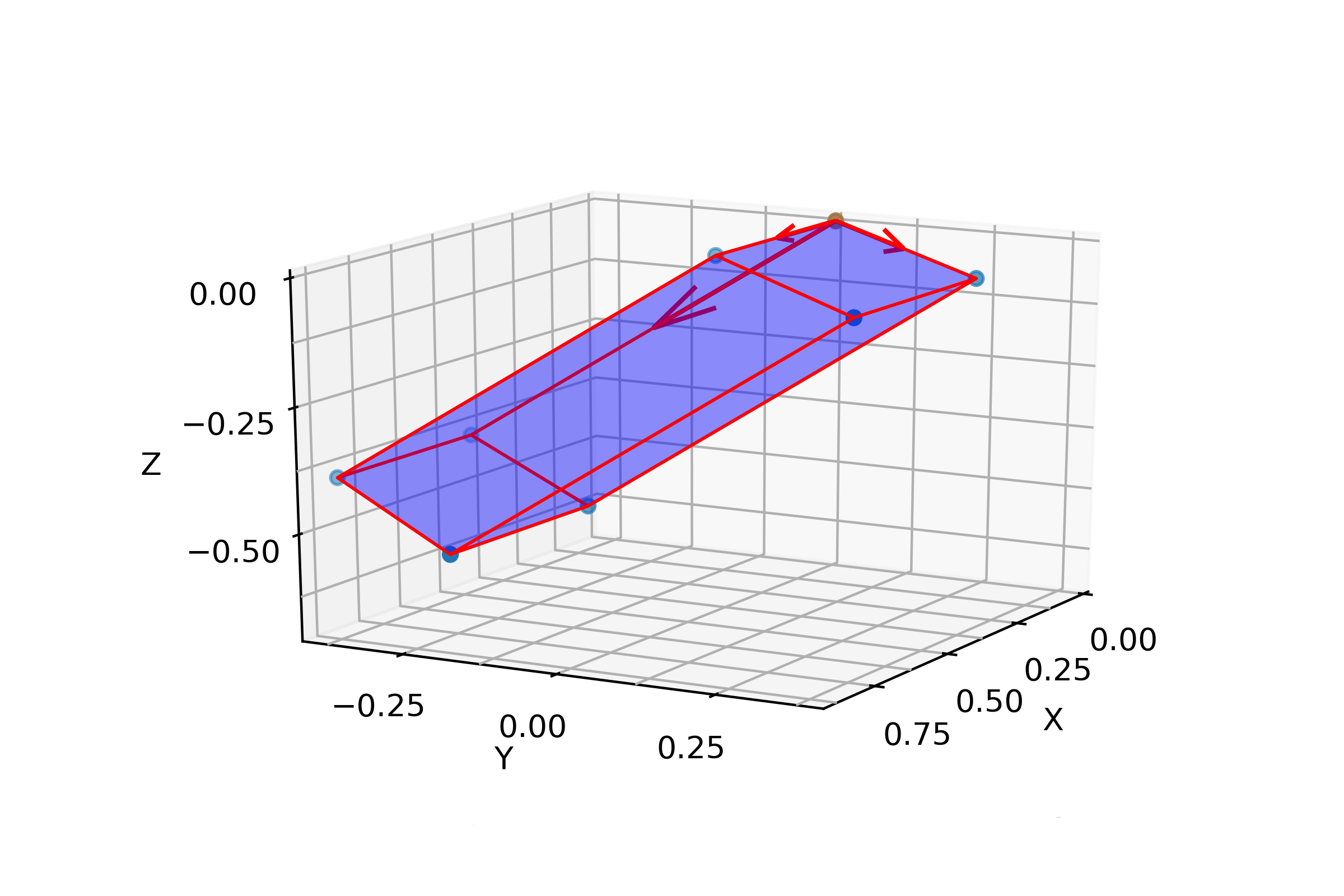}
    \end{minipage}&
    \begin{minipage}{2.5 cm}
    \small
    \textbf{Non-Planer} \vspace{0.1 cm}\\
     No. of
     Orthogonal
     pair(Op):\\
     Op=0/1.
    \end{minipage}
    &\textbf{Type III}\\
    \hline
    
    \begin{minipage}{4 cm}
    \small
    $a\ket{00}+b\ket{01}+c\ket{02}+p\ket{10}+q\ket{11}+z\ket{22}$ \vspace{0.1 cm}\\
    $~~=\ket{0}_A(a\ket{0}_B+b\ket{1}_B$\\
    $+c\ket{2}_B)+\ket{1}_A(p\ket{0}_B$\\
    $+q\ket{1}_B)+\ket{2}_A(x\ket{2}_B)$
    \end{minipage}&
    \begin{minipage}{2.5 cm}
    \small
    \vspace{0.2 cm}
    $E_{Gm}<1$ \vspace{0.1cm}\\
    \textbf{Remarks:}\\
    Maximizes as $c\rightarrow 0$;\\ $ |b|^2+|c|^2,$\\$|p|^2+|x|^2,$\\$|x|^2\rightarrow\frac{1}{3}$;\\
    $a^*p+b^*q=0$.
    \vspace{0.2 cm}
    \end{minipage} &
    \begin{minipage}{5 cm}
    \vspace{0.1 cm}
      \includegraphics[width=1\linewidth,trim={2.5cm 2.5cm 2.5cm 2.5cm},clip]{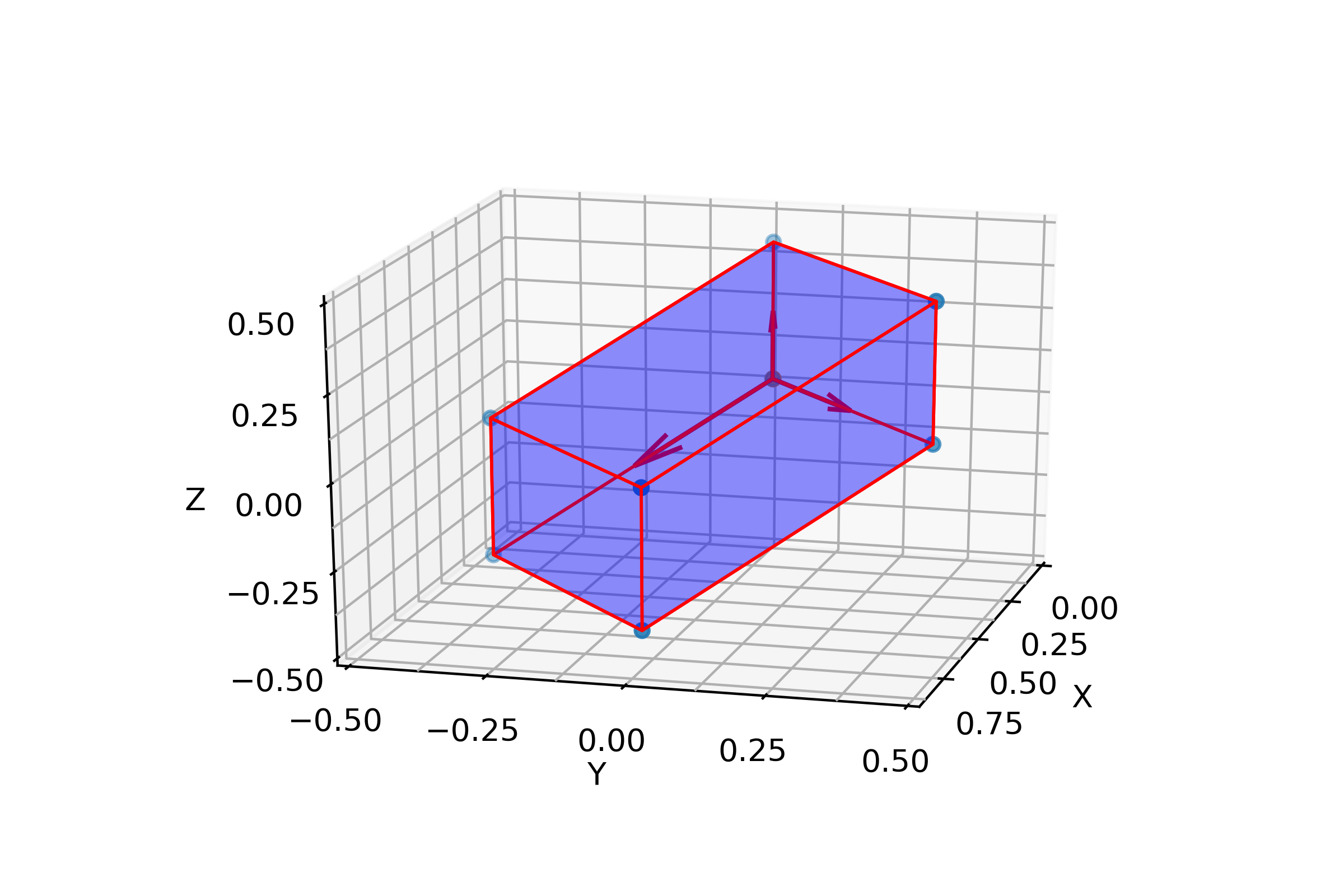}
    \end{minipage}&
    \begin{minipage}{2.5 cm}
    \small
    \textbf{Non-Planer} \vspace{0.1 cm}\\
     No. of
     Orthogonal
     pair(Op):\\
     Op=1/2.
    \end{minipage}
    &\textbf{Type III}\\
    \hline
    \end{tabular}
    \caption{Geometric structure and entanglement type of six term states}
    \end{table}
\end{widetext}

Without loss of generality we relabel those m parties as $\{1,2,\cdots,m\}$ and the remaining parties as $\{m+1,m+2,\cdots,n\}$. In terms of the post measurement vector of those n-m parties, the state can be rewritten as $\ket{\psi}=\sum_{i_1,\cdots,i_m}\ket{i_1i_2\cdots i_m}\ket{\alpha_{i_1\cdots i_m}}$ where, $\ket{\alpha_{i_1\cdots i_m}}=\sum_{i_{m+1}\cdots i_n}a_{i_1\cdots i_n}\ket{i_{m+1}i_{m+2}\cdots i_n}$ is the post-measurement vector of dimension $D=d_{m+1}.d_{m+2}\cdots d_n$. We propose the entanglement measurement for such m party bipartition as:
\begin{equation}
    \begin{split}
        E_G^m&=|\wedge_{i_1,\cdots,i_n=0}^{d_1-1,\cdots,d_n-1}\ket{\alpha_{i_1\cdots i_m}}|^2\\&+\sum_{i_k=0}^{d_k-1}|\wedge_{i_1,\cdots,i_{k-1},i_{k+1},\cdots,i_m=0}^{d_1-1,\cdots,d_{k-1}-1,d_{k+1}-1,\cdots,d_m-1}\ket{\alpha_{i_1\cdots i_{k-1}i_{k+1} \cdots i_m}}|^2+ \\
        &\cdots+\frac{1}{2}\sum_{i_1,\cdots,i_m=0}^{d_1-1,\cdots ,d_m-1}\sum_{j_1,\cdots,j_m=0}^{d_1-1,\cdots ,d_m-1}|\ket{\alpha_{i_1\cdots i_m}}\wedge\ket{\alpha_{j_1\cdots j_m}}|^2
    \end{split}
\end{equation}
So, we define the total entanglement of the n-partite qudit state as:
\begin{equation}
    E_G=\sum_{m=1}^{n-1}\mathcal{C}_m^n E_G^m
\end{equation}
where $\mathcal{C}_m^n$ denotes all possible combination of m items selected from n items.

\section{Conclusion} Our entanglement measure for bipartite qutrit systems incorporates area and volume elements of the 3 dimensional representation of the complex parallelepiped constructed by the post-measurement vectors. This geometry entirely encapsulates the distinct structures of entangled classes of those states. We have shown that there are three non-transformable types of geometries, namely three planar vectors, three mutually orthogonal vectors and three vectors neither mutually orthogonal nor planar. States with different types of geometries cannot be transformed into one another with nonzero probability under LOCC, hence they correspond to different classes of entangled states. The geometric shapes along with the maximizing condition of entanglement have been presented for all types of bipartite pure qutrit states. The maximum entanglement can be found for the type II entangled class, with the maximum value being 1. The planar structure, being type I entangled states have entanglement no more than $\frac{1}{2}$. Type III states have the maximum amount of entanglement strictly less than unity. The entanglement maximization condition from geometry has been presented for each states, which admits simplified algebric conditions. Finally, the measure has been generalized to two party arbitrary pure states and to multiparty scenario. Our results show that entanglement has an inherent connection with geometry and our entanglement measure based on wedge product formalism geometrically classifies different entangled classes as well as simplifies the entanglement maximization criteria for pure states.  

\begin{acknowledgements}
We would like to thank Mr. Abhinash Kumar Roy for numerous enlightening discussions. SM and SD would like to thank PKP and IISER Kolkata for the hospitality. PKP acknowledges the support from DST, India through Grant No. DST/ICPS/ QuST/Theme-1/2019/2020- 21/01.
\end{acknowledgements}

% Authors must disclose all relationships or interests that 
% could have direct or potential influence or impart bias on 
% the work: 
%
% \section*{Conflict of interest}
%
% The authors declare that they have no conflict of interest.
\bibliography{sample} 
%\bibliographystyle{spmpsci_unsort}

% BibTeX users please use one of
%\bibliographystyle{spbasic}      % basic style, author-year citations
     % mathematics and physical sciences
%\bibliographystyle{spphys}       % APS-like style for physics
  % name your BibTeX data base

% Non-BibTeX users please use
%\begin{thebibliography}{}
%
% and use \bibitem to create references. Consult the Instructions
% for authors for reference list style.
%

%\bibitem{RefJ}
% Format for Journal Reference
%Author, Article title, Journal, Volume, page numbers (year)
% Format for books
%\bibitem{RefB}
%Author, Book title, page numbers. Publisher, place (year)
% etc
%\end{thebibliography}

\end{document}